\definecolor{bleu_sombre}{rgb}{0,0,0.6}  \definecolor{rouge_sombre}{rgb}{0.8,0,0}\definecolor{vert_sombre}{rgb}{0,0.6,0}
\theoremstyle{plain}
\newtheorem{theorem}{{Theorem}}[section]
\newtheorem*{theorem*}{{Theorem}}
\newtheorem{proposition}[theorem]{Proposition}
\newtheorem*{proposition*}{Proposition}
\newtheorem{corollary}[theorem]{Corollary}
\newtheorem*{corollary*}{Corollary}
\newtheorem{lemma}[theorem]{Lemma}
\newtheorem*{lemma*}{Lemma}
\theoremstyle{definition}
\newtheorem*{definition*}{Definition}
\theoremstyle{remark}
\newtheorem{remark}[theorem]{Remark}
\renewcommand{\leq}{\leqslant}	\renewcommand{\geq}{\geqslant}
\newcommand{\R}{\mathbb{R}}
\newcommand{\dd}{\mathrm{d}}
\title[]{The Flea on the Magnetic Elephant}
\author[P.\ Exner]{Pavel Exner}
\address[P. Exner]{Doppler Institute for Mathematical Physics and Applied Mathematics\\
Czech Technical University\\
B\v rehov\'a 7\\ CZ-11519 Prague\\ and Department of
Theoretical Physics\\ NPI\\ Academy of Sciences\\ CZ-25068 \v{R}e\v{z}
near Prague, Czechia} \email{exner@ujf.cas.cz}
\urladdr{https://people.fjfi.cvut.cz/exnerpav/}
\author[L. Morin]{Léo Morin}
\address[L. Morin]{Department of Mathematics, University of Copenhagen, Universitetsparken 5, DK-2100 Copenhagen \O, Denmark}
\email{lpdm@math.ku.dk }
\begin{document}

	\maketitle

\begin{abstract}
We investigate a two-dimensional magnetic Laplacian with two radially symmetric magnetic wells. Its spectral properties are determined by the tunneling between them. If the tunneling is weak and the wells are mirror symmetric, the two lowest eigenfunctions are localized in both wells being distributed roughly equally. In this note we show that an exponentially small symmetry violation can in this situation have a dramatic effect, making each of the eigenfunctions localized dominantly in one well only. This is reminiscent of the `flea on the elephant' effect for Schrödinger operators; our result shows that it has a purely magnetic counterpart.
\end{abstract}	

\section{Introduction}

One of the remarkable effects in quantum mechanics concerns the situation when a tiny violation of symmetry of a double well system may cause a dramatic change in the associated bound state wave functions. Barry Simon \cite{Simon85} gave it the name `flea on the elephant' when he extended an earlier work done in \cite{JLMS81,GGJL84} to a multidimensional setting; in his words the flea does not change the shape of the elephant but it can irritate the elephant enough so that it shifts its weight.

The crucial element in understanding of this effect was an analysis of tunneling between symmetric wells, specifically the construction of an interaction matrix which effectively describes the bottom of the spectrum and the associated eigenfunctions. Tunneling in magnetic systems turned out to be more complicated, and for a long time only partial results were known \cite{HS87b}. This has changed by recent results which brought the first tunneling formul{\ae} with magnetic field under the radiality assumptions on the wells \cite{FSW22,FM25,FMR,HK22,HKS23,M24}; they involved new purely magnetic effects due to fast oscillations of the eigenfunctions. In \cite{GBFMR}, the first example of tunneling between non-radial purely magnetic wells was obtained.

The aim of the present note is to show that the `flea on the elephant' effect has a purely magnetic counterpart which shares with the original the mechanism based on the tunneling between two sufficiently separated wells. Imposing the  radiality assumptions on the wells, we adapt the techniques from \cite{FMR} in order to study perturbations of a purely magnetic symmetric double-well, and show that tiny perturbations can again have a dramatic effect on the eigenfunctions.

In the next section we formulate the problem and state our main result, Theorem~\ref{thm.doublewell}, reducing the semiclassical expansion at the bottom of the spectrum to a $2\times 2$ matrix problem; we also apply it to the case of our interest when the symmetry of double-well system is slightly perturbed. Section~\ref{sec.proof} is devoted to the proof of Theorem~\ref{thm.doublewell}.

\section{Perturbations of Symmetric Magnetic Wells}

\subsection{The problem setting and preliminaries}

As indicated, we will be concerned with purely magnetic double wells; our system is two-dimensional and exposed to magnetic field perpendicular to the plane. Specifically, we consider a pair of \emph{radial} magnetic wells centered at $x_\ell = \big(-\frac L2,0\big)$ and $x_r=\big(\frac L 2,0\big)$; the respective left and right fields are given by $\mathscr{C}^\infty$-smooth functions $B_\ell$ and $B_r$, such that
\begin{itemize}
\item[-] $B_\star >0$ is a radial function of $x-x_\star$,
\item[-] $B_\star$ has its unique minimum at $x_\star$ which is non-degenerate,
\item[-] $B_\star(x) = b_1$ holds for $|x-x_\star|\geq a$,
\end{itemize}
where $\star \in \lbrace \ell, r \rbrace$, and $a \in \big(0,\frac L2\big)$ and $b_1 >0$ are some constants. The double-well magnetic field $B$ is then defined as superposition of the two single wells,
\begin{equation}\label{def.B}
B(x) = \begin{cases}
B_\ell(x) &{\textrm{if}} \quad x_1 \le0,\\
B_r(x) &{\textrm{if}} \quad x_1 \ge0.
\end{cases}
\end{equation}
With each of the fields $B_\ell$ and $B_r$ we associated a vector potential satisfying $\nabla \times A_\star = B_\star$; we choose for them the circular gauge,
\[ A_\star(x) = \int_{[x_\star,x]} B_\star (y) (y-x_\star)^\perp \dd y, \quad y^\perp = (-y_2,y_1), \quad \star \in \lbrace \ell, r \rbrace.\]

Let us first recall some known facts about single-well system properties. The operators describing the individual wells are then given by
\[ \mathscr{L}_\star = (ih\nabla + A_\star)^2, \quad \star \in \lbrace \ell,r \rbrace. \]
Since each of the magnetic fields $B_\star$ have by assumption a unique and non-degenerate minimum, it is well known \cite{2Dradial,HKo,RV} that the bottom of the spectrum of $\mathscr{L}_\star$ is discrete and asymptotic expansion for the low eigenvalues can be obtained explicitly. In particular, according to \cite[Theorem~1.2]{2Dradial} we have

\begin{theorem} \label{thm.single.magnetic.well}
For $\star \in \lbrace \ell, r \rbrace$, the ground state of $\mathscr{L}_\star$ is a radial function of $x-x_\star$. Moreover, the eigenvalues at the bottom of the spectrum of $\mathscr{L}_\star$ satisfy
\[ \mu_{\star,j}(h) =h \min B_\ell + h^2 \big( 2j d_\star + d'_\star \big) + \mathscr{O}(h^3),\]
with
\[ d_\star = \frac{\sqrt{\det H_\star}}{\min B_\star}, \qquad d'_\star = \frac{\big( {\mathrm{tr}} H_\star^{\frac 12}\big)^2}{2\min B_\star}, \qquad H_\star = \frac 12\, {\rm{Hess}} B_\star(x_\star). \]
\end{theorem}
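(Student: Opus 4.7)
My plan is to exploit the radial symmetry to diagonalize $\mathscr{L}_\star$ simultaneously with the angular momentum, reducing to a family of one-dimensional operators on the half-line, and then to perform a semiclassical harmonic approximation near $r=0$. In polar coordinates $(r,\theta)$ centered at $x_\star$, the circular gauge takes the form $A_\star = \tilde\alpha_\star(r)\,\hat\theta$ with $\tilde\alpha_\star(r)=\frac{1}{r}\int_0^r s\,B_\star(s)\,\mathrm{d}s$, so $\mathscr{L}_\star$ commutes with rotations around $x_\star$. Decomposing $\psi = u_m(r)\,e^{im\theta}$ and using the measure $r\,\mathrm{d}r\,\mathrm{d}\theta$, the operator becomes, on the $m$-th sector,
\[
 \mathscr{L}_{\star,m} = -h^2\partial_r^2 - \frac{h^2}{r}\partial_r + \Big(\frac{hm}{r} - \tilde\alpha_\star(r)\Big)^2.
\]

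The next step is to isolate the sector $m=0$, which will yield the radiality of the ground state. For $m\ne 0$, the effective potential $V_m(r)=(hm/r-\tilde\alpha_\star(r))^2$ attains its minimum at $r\sim \sqrt{|m|/B_0}$ uniformly in $h$ (where $B_0:=\min B_\star$), so the bottom of $\sigma(\mathscr{L}_{\star,m})$ is bounded below by a constant and therefore lies well above $h B_0$. For $m=0$, I would rescale $r=\sqrt{h}\,\rho$ and use the Taylor expansion $\tilde\alpha_\star(\sqrt{h}\rho)=\tfrac{\sqrt{h}\,\rho}{2}\,B_0+\mathscr{O}(h^{3/2}\rho^3)$; dividing by $h$, the leading operator is the radial 2D isotropic harmonic oscillator $-\Delta_y+\tfrac{B_0^2|y|^2}{4}$, whose eigenvalues on radial functions are $B_0(2j+1)$, thereby producing the leading term $h\min B_\star$ and the $2jd_\star$ structure of the subprincipal one (note that in the radial case $H_\star=\tfrac{\lambda}{2}I$, so $d_\star=\lambda/(2B_0)$ matches the oscillator gap).

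To promote this picture to a genuine two-term asymptotic expansion I would construct quasimodes of the form $u(r;h)=\sum_{k\ge0}h^{k/2}a_k(\rho)$, $\rho=r/\sqrt{h}$, by solving the transport equations order by order: the $h^0$-equation recovers the 2D oscillator eigenfunction, and the next order produces the correction $d'_\star$ coming jointly from the quartic part of $\tilde\alpha_\star^2$ (governed by $\mathrm{Hess}\,B_\star(x_\star)$) and the weight $r^{-1}\partial_r$ in the radial Laplacian. Feeding these quasimodes into the spectral theorem gives the upper bound $\mu_{\star,j}(h)\le h\min B_\star+h^2(2jd_\star+d'_\star)+\mathscr{O}(h^3)$. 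The matching lower bound, and the isolation of these eigenvalues from the rest of the spectrum of $\mathscr{L}_\star$, are obtained via Agmon-type exponential weight estimates proving that true eigenfunctions concentrate at $x_\star$ on the scale $\sqrt{h}$, combined with an IMS partition separating the angular sectors.

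The main technical obstacle is the precise identification of the subprincipal coefficient $d'_\star=(\mathrm{tr}\,H_\star^{1/2})^2/(2\min B_\star)$: it is not the ``naive'' trace $\mathrm{tr}\,H_\star$ that appears, but the square of the trace of $H_\star^{1/2}$, which reflects the symplectic nature of the effective reduced Hamiltonian — each principal axis of $B_\star$ contributes a frequency proportional to $\sqrt{\lambda_i}$, and these frequencies add rather than their squares. Making this bookkeeping rigorous (while handling the singular measure $r\,\mathrm{d}r$ and the coalescence of the minima of $V_m$ at $r=0$ only for $m=0$) is precisely what is carried out in \cite{2Dradial}, whose conclusion I would invoke for the stated expansion.
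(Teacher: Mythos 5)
The paper itself gives no proof of this statement: it is quoted directly from \cite[Theorem~1.2]{2Dradial}. So the question is whether your sketch would actually reconstruct that result, and it would not, because of a structural error in the angular-momentum analysis. Balancing $hm/r$ against $\tilde\alpha_\star(r)\approx \tfrac{B_0 r}{2}$ gives $r_m\sim\sqrt{2h|m|/B_0}$, not $r\sim\sqrt{|m|/B_0}$: you dropped the factor $h$. Consequently, for every fixed $m$ of the sign aligned with the field, the effective potential $V_m$ \emph{vanishes} at a point of scale $\sqrt h$, and the bottom of the sector operator $\mathscr{L}_{\star,m}$ is $hB_0+\mathscr{O}(h^2)$, not bounded below by a constant. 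The low-lying eigenvalues $\mu_{\star,j}$ are in fact (up to $\mathscr{O}(h^3)$ and relabeling) the \emph{ground states of the different sectors} $m=0,1,2,\dots$ -- a splitting of the lowest Landau level by the well -- and the spacing $2d_\star h^2$ records how the sector ground energy grows with $m$: heuristically, the sector-$m$ state lives near $r_m$ where the field is $B_\star(r_m)\approx B_0+\tfrac{\lambda}{2}r_m^2$ (with $\mathrm{Hess}\,B_\star(x_\star)=\lambda I$), so its energy is about $hB_0+\lambda m h^2/B_0=hB_0+2d_\star m\,h^2$. Your plan instead attributes the $2jd_\star h^2$ structure to radial excitations inside the $m=0$ sector; but there the rescaled operator is the radial part of the 2D oscillator with eigenvalues $hB_0(2j+1)+\mathscr{O}(h^2)$, so the intra-sector gap is $2B_0h$ -- of order $h$ and independent of $\mathrm{Hess}\,B_\star$. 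It cannot produce the $h^2$-scale splitting, and the parenthetical claim that $d_\star$ ``matches the oscillator gap'' conflates these two scales.

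For the same reason the radiality of the ground state cannot be obtained by declaring the $m\neq0$ sectors high-energy: all sectors of the favorable sign compete within $\mathscr{O}(h^2)$ of $hB_0$, so one must compare their ground energies to that precision. A corrected version of your plan would have to carry out the two-term harmonic approximation in every sector with $m=\mathscr{O}(1)$, around the $m$-dependent minimum $r_m\sim\sqrt h$ (together with rough lower bounds for large $|m|$ and for the unfavorable sign of $m$, and uniformity in $m$), then identify $\mu_{\star,j}$ with the sector-$(j-1)$ ground state and read off $d_\star$ and $d'_\star$ from that fibered analysis; only the $m=0$ statement concerns the ground state, which is why the theorem asserts radiality for the ground state alone. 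Invoking \cite{2Dradial} at the end is legitimate -- that is all the paper does -- but the intermediate mechanism you describe is not the one behind the formula and would not yield it.
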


\smallskip

In the following we will use for simplicity the symbol $\mu_\star := \mu_{\star,1}$ for the ground state energy of $\mathscr{L}_\star$ and by $\Psi_\star$ we denote the corresponding ground state eigenfunction. Since it is a radial function, the eigenvalue equation becomes the 2D radial Schrödinger equation,
\[ -\frac{1}{r} \partial_r r \partial_r \Psi_\star + \frac{1}{r^2} \Big( \frac{1}{2\pi} \int_{D(x_\star,r)} B_\star \mathrm d x \Big)^2 = \mu_\star \Psi_\star ,\]
in the coordinate $r=|x-x_\star|$. In view of this observation, we can construct WKB approximations of $\Psi_\star$ in the spirit of \cite{H88,HS}. Note that such expansions can also be obtained in a magnetic non-radial setting \cite{WKB,2Dradial}, however, these are not known to be exponentially good approximations.

\begin{proposition}\label{prop.WKB}
For $\star \in \lbrace \ell, r \rbrace$, we put
\[d_\star(s):=\int_{0}^{s}\frac{1}{2 \pi v^2} \left(\int_{D(x_\star,v)} B_\star \mathrm{d}x\right) v \mathrm{d}v\,.\]	
There exists a sequence of smooth functions $(a_{j})_{j\geq 0}$ on $\R^2$, radial with respect to $x-x_\star$, with $a_{0}>0$ such that the following holds: for any given $p\in\mathbb{N}$, setting $a_{h,\star}(x)=\sum_{j=0}^p h^j a_j(x)$, and considering
\[\Psi^{\mathrm{WKB}}_\star(x)=h^{-\frac12}a_{h,\star}(x)e^{-d_\star(|x-x_\star|)/h}\,,\]
we have
\[e^{d_\star(|x-x_\star|)/h}\left(\mathscr{L}_{\star}-\mu_\star\right)\Psi_\star^{\mathrm{WKB}}=\mathscr{O}(h^{p+1})\,.\]
Moreover, for all $\epsilon\in(0,1)$, there exist positive $C, h_0>0$ such that, for $h\in(0,h_0)$,
\begin{equation}\label{eq.Agmon}
\|(-ih\nabla-{A}_\star)(e^{(1-\epsilon)d_\star}\Psi_\star)\|\leq Ch\|\Psi_\star\|
\end{equation}
holds for all $h\in(0,h_0)$, and we also have the approximation
\[e^{d_\star(|x-x_\star|)/h}(\Psi_{\star}(x)-\Psi_\star^{\mathrm{WKB}}(x))=\mathscr{O}(h^{p+1})\,.\]
\end{proposition}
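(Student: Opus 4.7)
The plan is to exploit the radial symmetry in order to reduce the WKB construction to a one-dimensional problem in the spirit of Helffer--Sjöstrand \cite{H88,HS}, and then to upgrade the formal expansion to an exponentially sharp approximation of $\Psi_\star$ via magnetic Agmon estimates. In the circular gauge, the radiality of $B_\star$ yields $A_\star(x) = \frac{\Phi_\star(r)}{2\pi r^2}(x-x_\star)^\perp$ with $r=|x-x_\star|$ and $\Phi_\star(r)=\int_{D(x_\star,r)} B_\star\,\dd y$, so that on radial functions $u(r)$ the operator reduces to
\[
\mathscr{L}_\star u = -h^2\Big(\partial_r^2 + \tfrac{1}{r}\partial_r\Big) u + V_\star(r)\,u,\qquad V_\star(r) := \Big(\frac{\Phi_\star(r)}{2\pi r}\Big)^2 = \big(d_\star'(r)\big)^2.
\]
This is a standard one-dimensional WKB problem around the quadratic minimum of $V_\star$ at $r=0$.

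\textbf{Formal expansion.} Substituting the ansatz $\Psi^{\mathrm{WKB}}_\star = h^{-1/2} a_h e^{-\phi/h}$ with $a_h=\sum_{j=0}^p h^j a_j$ and collecting powers of $h$, the order $h^0$ coefficient gives the eikonal equation $(\phi')^2=V_\star$, solved by the nonnegative root $\phi=d_\star$. The order $h^1$ coefficient is a first-order linear transport equation for $a_0$; the requirement of smoothness at $r=0$ fixes $\mu^{(1)} = B_\star(x_\star)$, consistent with Theorem~\ref{thm.single.magnetic.well}, and leaves $a_0(0)>0$ as a free normalization. The higher orders recursively determine $a_j$ and $\mu^{(j+1)}$ through first-order ODEs with non-characteristic flow, and smoothness at $r=0$ follows inductively by matching the integrability constants. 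This produces the remainder bound $e^{d_\star/h}(\mathscr{L}_\star - \mu_\star)\Psi^{\mathrm{WKB}}_\star = \mathscr{O}(h^{p+1})$.

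\textbf{Agmon estimate.} For \eqref{eq.Agmon} I would invoke the standard magnetic IMS identity, valid for real $\phi$,
\[
\Re\big\langle \mathscr{L}_\star u,\, e^{2\phi/h}u\big\rangle = \big\|(ih\nabla + A_\star)(e^{\phi/h}u)\big\|^2 - \big\|\,|\nabla\phi|\,e^{\phi/h}u\big\|^2,
\]
applied to $u=\Psi_\star$ with $\phi=(1-\epsilon)d_\star$. The 2D magnetic lower bound $(ih\nabla + A_\star)^2 \geq h B_\star$, combined with $|\nabla d_\star|^2 = |A_\star|^2$ and the elementary inequality $h B_\star(x) - (1-\epsilon)^2|A_\star(x)|^2 \geq c_\epsilon h$ for $|x-x_\star|$ bounded away from $0$, absorbs the bulk contribution and yields \eqref{eq.Agmon}.

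\textbf{Comparison with $\Psi_\star$.} By construction, $(\mathscr{L}_\star - \mu_\star)\Psi^{\mathrm{WKB}}_\star$ is $\mathscr{O}(h^{p+1}e^{-d_\star/h})$ in $L^2$. Since $\mu_\star$ is a simple eigenvalue separated from the rest of the spectrum by a gap of order $h^2$ (Theorem~\ref{thm.single.magnetic.well}), inverting $\mathscr{L}_\star - \mu_\star$ on the orthogonal complement of $\Psi_\star$ costs at most $\mathscr{O}(h^{-2})$; adjusting $p$ and the normalization then gives $\|\Psi_\star - \Psi^{\mathrm{WKB}}_\star\|_{L^2} = \mathscr{O}(h^{p+1}e^{-d_\star/h})$. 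Combining this with \eqref{eq.Agmon} and standard elliptic regularity in the $e^{d_\star/h}$-weighted spaces (as in \cite{HS,H88}) upgrades the $L^2$ bound to the pointwise approximation $e^{d_\star/h}(\Psi_\star - \Psi^{\mathrm{WKB}}_\star) = \mathscr{O}(h^{p+1})$. The most delicate point I anticipate is this last conversion: one must check that the Agmon weight is compatible with interior regularity with the exact rate $d_\star$ rather than $(1-\epsilon)d_\star$, which requires recovering the $\epsilon$-loss in \eqref{eq.Agmon} at the very end by pushing the weight just below the full Agmon rate and then taking $p$ large enough, as carried out in the radial magnetic WKB analysis of \cite{WKB,2Dradial}.
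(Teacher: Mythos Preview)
The paper does not actually prove this proposition: the paragraph preceding it explains that the radial reduction turns the ground-state equation into a one-dimensional Schr\"odinger problem, after which the WKB expansion and Agmon estimate are taken from \cite{H88,HS} (with \cite{WKB,2Dradial} cited for the magnetic setting). Your overall architecture---reduce to the radial operator with effective potential $V_\star=(d_\star')^2$, solve the eikonal/transport hierarchy, then upgrade via Agmon and a spectral-gap argument---is precisely the route the paper indicates.

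There is, however, a concrete error in your Agmon step. The ``elementary inequality'' $hB_\star(x)-(1-\epsilon)^2|A_\star(x)|^2\geq c_\epsilon h$ is \emph{false} away from a compact set: for $|x-x_\star|\geq a$ one has $B_\star\equiv b_1$ while $|A_\star(x)|\sim \tfrac{b_1}{2}|x-x_\star|$ grows linearly, so the left-hand side tends to $-\infty$. The two-dimensional magnetic lower bound $(ih\nabla+A_\star)^2\geq hB_\star$ is therefore far too weak to control the weighted quadratic form at infinity, and the argument as written collapses there. The repair is to use radiality once more rather than the crude magnetic bound: since $\Psi_\star$ is radial (Theorem~\ref{thm.single.magnetic.well}) and $A_\star$ is tangential in the circular gauge, the quadratic form of $\mathscr{L}_\star$ on radial functions is exactly $h^2\|\partial_r u\|^2+\langle V_\star u,u\rangle$, with the \emph{full} potential $V_\star=|A_\star|^2$ appearing, not just $hB_\star$. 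The IMS identity with $\phi=(1-\epsilon)d_\star$ then gives $V_\star-|\nabla\phi|^2=(2\epsilon-\epsilon^2)V_\star$, which is bounded below by a positive constant outside an $\mathscr{O}(\sqrt{h})$-neighbourhood of $x_\star$ (where the weight $e^{\phi/h}$ is in any case bounded since $d_\star(r)\sim r^2$). This is the standard electric-type Agmon argument of \cite{HS} applied to the reduced radial problem, and with this correction the remainder of your outline goes through.
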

\noindent In particular, the inequality \eqref{eq.Agmon} provides us with exponential decay estimates on the eigenfunction $\Psi_\star$.

\subsection{Magnetic double wells}

Let us now turn to our proper topic. For the double-well magnetic field \eqref{def.B} we choose a vector potential $A$ such that $\nabla \times A = B$ and the system dynamics is governed by the self-adjoint realization of
\[ \mathscr{L}=(ih\nabla+A)^2 \;\; \text{ on } \; \mathbb R^2\,;\]
in contrast to the previous subsection we do not need to fix a particular gauge.

The central object here is the subspace associated with the spectral projection $\Pi = \mathbf{1}_{(-\infty,\Lambda ]}\big( \mathscr{L} \big)$, where $\Lambda := \min \big( \frac{\mu_{\ell}+\mu_{\ell,2}}{2}, \frac{\mu_r + \mu_{r,2}}{2} \big) $. If the two ground state energies are close to each other, one can construct a basis of ${\rm{Ran}} \, \Pi$ using the corresponding left and right well eigenfunctions, $\Psi_\ell$ and $\Psi_r$.

\begin{theorem}\label{thm.doublewell}
Assume that $\mu_\ell - \mu_r = o(h^2)$ and $L> (2+\sqrt{6})a$. Then there exists an orthonormal basis $(e_\ell,e_r)$ of ${\rm{Ran}} \, \Pi$ such that
\[ e_\ell = \Psi_\ell + \mathscr{O}(h^\infty), \quad e_r = \Psi_r + \mathscr{O}(h^\infty),\]
and the matrix representing $\mathscr{L}$ in this basis is
\[ M =
\begin{pmatrix}
\mu_\ell & w \\
\overline{w} & \mu_r
\end{pmatrix}
+ o(w) \]
with $w= c_0 h^{\nu} e^{-\frac{S}{h}} (1+o(1))$ as $h\to 0$ for some constants $c_0 \neq 0$ and $\nu \in \mathbb R$. Here
\begin{equation}\label{defS}
 S= d_\ell \left(\textstyle{\frac L2}\right) + d_r \left(\textstyle{\frac L2} \right) + I,
\end{equation}
where the $d_\star$ are defined in Theorem~\ref{thm.single.magnetic.well}, $I$ is a positive number given by
\[
I= \frac{M}{2} - \frac{b_1L^2}{8} + \frac{b_1L}{2} \sqrt{\frac{L^2}{4}-\frac{M}{b_1}} - M \ln \left( 1 + \sqrt{ 1 - \frac{4M}{b_1 L^2}}\right),
\]
and $M = \frac{1}{2\pi} \int_{\R^2} (2 b_1 - B_\ell - B_r)\,\mathrm{d}x$.
\end{theorem}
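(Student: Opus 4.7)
The overall strategy is to follow the interaction-matrix approach of \cite{FMR}: build two quasi-modes from the single-well ground states $\Psi_\ell,\Psi_r$, check that they span $\mathrm{Ran}\,\Pi$ modulo $\mathscr{O}(h^\infty)$, and, after orthonormalisation, identify the entries of $\mathscr{L}$ in the resulting basis through an interaction formula localised on the symmetry axis $\Gamma=\{x_1=0\}$.

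\emph{Step 1 (quasi-modes and projection).} Since the global gauge $A$ does not coincide with the circular gauges $A_\ell,A_r$, one first gauges locally: around each $x_\star$ pick a smooth $\phi_\star$ with $A-A_\star=\nabla\phi_\star$ on a suitable half-plane, so that $\widehat\Psi_\star:=e^{i\phi_\star/h}\Psi_\star$ satisfies $\mathscr{L}\widehat\Psi_\star=\mu_\star\widehat\Psi_\star$ there. Multiplying by a smooth cut-off $\chi_\star$ equal to $1$ on a large disk around $x_\star$ and vanishing on the opposite side yields quasi-modes $\tilde\Psi_\star=\chi_\star\widehat\Psi_\star$. The Agmon estimate \eqref{eq.Agmon}, together with the hypothesis $L>(2+\sqrt{6})a$, guarantees that $\|(\mathscr{L}-\mu_\star)\tilde\Psi_\star\|$ is exponentially smaller than $e^{-S/h}$; the gap separating $\mu_\star$ from the rest of the spectrum above $\Lambda$ then forces $\Pi\tilde\Psi_\star=\tilde\Psi_\star+\mathscr{O}(h^\infty)$. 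A Gram-Schmidt step on $(\Pi\tilde\Psi_\ell,\Pi\tilde\Psi_r)$ delivers the orthonormal basis $(e_\ell,e_r)$ claimed in the theorem.

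\emph{Step 2 (interaction formula).} To extract $w$, I would introduce a cut-off $\chi$ equal to $1$ on the left half-plane and $0$ on the right, write $w=\langle\mathscr{L}\tilde\Psi_\ell,\tilde\Psi_r\rangle+o(w)$, and integrate by parts: the bulk terms cancel because each $\widehat\Psi_\star$ solves the eigenequation on its own side of $\Gamma$, leaving a first-order boundary expression on $\Gamma$ involving $\Psi_\ell$, $\Psi_r$ and the gauge-difference phase $\Phi=\phi_\ell-\phi_r$. Substituting the WKB ansatz from Proposition~\ref{prop.WKB} (with an $\mathscr{O}(h^\infty e^{-S/h})$ error) reduces $w$, up to an explicit power of $h$, to a one-dimensional integral of the form
\[
\int_{\R}a_{h,\ell}(0,x_2)\,a_{h,r}(0,x_2)\,e^{-(d_\ell(|x-x_\ell|)+d_r(|x-x_r|))/h}\,e^{i\Phi(0,x_2)/h}\,\dd x_2,
\]
with $x=(0,x_2)$. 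Non-degeneracy of the critical point of the complex phase $(d_\ell+d_r)-i\Phi$, together with $a_0>0$, then yields the announced asymptotics $w=c_0 h^\nu e^{-S/h}(1+o(1))$, $c_0\neq 0$, by standard steepest-descent asymptotics.

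\emph{Step 3 (identification of $S$, main obstacle).} The crux of the argument is the explicit evaluation of $S$, that is, of $\Re[(d_\ell+d_r)-i\Phi]$ at its saddle. Outside the disks of radius $a$ both fields equal $b_1$, so on $\Gamma$ one has the closed forms $d_\star(r)=\tfrac{b_1 r^2}{4}-m_\star\ln r+\mathrm{const}$, with $m_\star=\tfrac{1}{2\pi}\int(b_1-B_\star)\,\dd x$ (whence $m_\ell+m_r=M$), and $\Phi(0,x_2)$ is obtained by a direct integration of $A_r-A_\ell$ along $\Gamma$. The critical-point equation $\partial_{x_2}[(d_\ell+d_r)-i\Phi]=0$ factors in such a way that the relevant saddle sits at $x_2=-i\sqrt{L^2/4-M/b_1}$, rather than at $x_2=0$ as it would in the absence of the oscillating factor; this purely imaginary saddle is admissible as soon as $L^2>4M/b_1$, which is ensured by $L>(2+\sqrt{6})a$ via the trivial bound $M\le b_1 a^2$. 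Evaluating $\Re[(d_\ell+d_r)-i\Phi]$ at this saddle and subtracting the unperturbed rate $d_\ell(L/2)+d_r(L/2)$ reproduces, after elementary algebra, the four summands defining $I$; positivity of $I$ then follows directly from the formula.
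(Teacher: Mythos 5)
Your outline reproduces the right skeleton (quasimodes, projection, orthonormalisation, a boundary integral on $\{x_1=0\}$, a complex saddle at $x_2=-i\sqrt{L^2/4-M/b_1}$, and the correct value of $I$), but the step that actually produces the asymptotics of $w$ is not justified as you argue it. You substitute the WKB ansatz of Proposition~\ref{prop.WKB} into the boundary integral and invoke ``standard steepest-descent''. The saddle, however, lies at a distance $O(1)$ from the real axis, and on the real line the integrand has modulus of order $e^{-(d_\ell(L/2)+d_r(L/2))/h}$, which exceeds the answer $e^{-S/h}$ by the factor $e^{I/h}$ with $I>0$ fixed: the oscillation $e^{i(\sigma_\ell-\sigma_r)/h}$ produces exponential cancellation. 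Proposition~\ref{prop.WKB} only controls the remainder on $\R$, by $\mathscr{O}(h^{p+1})e^{-(d_\ell+d_r)/h}$; such a remainder cannot be continued off the real axis, and its contribution to the integral can only be bounded in absolute value by $\mathscr{O}(h^{p+1}e^{-(d_\ell(L/2)+d_r(L/2))/h})$, which swamps $c_0h^{\nu}e^{-S/h}$ for every fixed $p$ (no power of $h$ beats $e^{-I/h}$). So your claimed ``$\mathscr{O}(h^{\infty}e^{-S/h})$ error'' is not available from WKB. The paper's way around this is the key ingredient you are missing: outside the disks the field is exactly $b_1$, so $\Psi_\star$ has an exact Kummer-function representation (Lemma~\ref{lem.psi.formule}), which is holomorphic in $x_2$ with uniform asymptotics on the shifted line $x_2+iq$ (Lemma~\ref{lem.psi.asymp}); one deforms the contour \emph{first} and only then applies a real Laplace method, WKB being used solely to pin down the normalisation $\Psi_\star(0)$. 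The same deformation is needed for the overlap bound $\langle\psi_\ell,\psi_r\rangle=\mathscr{O}(h^{-1}|w|)$ (Lemma~\ref{lem.scalarproduct}).

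There is a second, smaller but genuine, gap in your error bookkeeping. The hypothesis $L>(2+\sqrt6)a$ does \emph{not} make the quasimode error $\|(\mathscr{L}-\mu_\star)\tilde\Psi_\star\|\sim e^{-d_\star(L-a)/h}$ exponentially smaller than $e^{-S/h}$: in general $d_\star(L-a)<S$ (e.g.\ for nearly constant fields $d_\star(s)\approx b_1s^2/4$ gives $d_\star(L-a)\approx b_1(L-a)^2/4<b_1L^2/4\approx S$). What the hypothesis yields is only $2d_\star(L-a)>S$ (Lemma~\ref{lem.error}), and this suffices because the paper arranges all such errors to enter the matrix \emph{quadratically} (quadratic-form bounds on $g_\star-\psi_\star$ in Lemma~\ref{lem.g-psi}, and the Gram reduction $G^{-1/2}LG^{-1/2}$). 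Relatedly, $w=\langle\mathscr{L}\tilde\Psi_\ell,\tilde\Psi_r\rangle+o(w)$ is false as stated: $\langle\mathscr{L}\psi_\ell,\psi_r\rangle$ contains $\mu_\ell\langle\psi_\ell,\psi_r\rangle$, and since $\langle\psi_\ell,\psi_r\rangle$ is only $\mathscr{O}(h^{-1}|w|)$ this term is comparable to $w$, not negligible; it is cancelled precisely by the orthonormalisation, which replaces the coupling by $\langle(\mathscr{L}-\tfrac{\mu_\ell+\mu_r}{2})\psi_\ell,\psi_r\rangle$. Without the quadratic error structure and the overlap estimate, the conclusion $M=\begin{pmatrix}\mu_\ell&w\\ \overline{w}&\mu_r\end{pmatrix}+o(w)$ does not follow from your Step 1--2.
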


\smallskip

We postpone the proof to Section \ref{sec.proof} below and focus first on the consequences of the theorem. It is easy to diagonalize the $2\times 2$ matrix $M$; as a corollary we deduce asymptotic behavior of the eigenfunctions of $\mathscr{L}$. The instability of these eigenfunctions  we are interested in is then a consequence of the instability of the eigenvectors of $M$.

\begin{corollary}\label{coro1}
Assume $\mu_\ell - \mu_r = o(h^2)$ and $L>(2+\sqrt{6})a$.
\begin{enumerate}
\item If $|\mu_\ell - \mu_r| \ll w$, then the normalized eigenfunctions $\Psi_1$ and $\Psi_2$ of $\mathscr{L}$ are localized in both wells roughly equally,
\[ \int_{x_1 >0} |\Psi_j(x)|^2 \dd x = \frac 12 + o(1),\quad j=1,2,\]
as $h \to 0$.
\item If $|\mu_\ell - \mu_r| \gg w$ with $\mu_\ell < \mu_r$, then the normalized groundstate $\Psi_1$ is localized in the left well, and the first excited state $\Psi_2$ in the right well,
\[ \int_{x_1>0} |\Psi_1(x) |^2 \dd x =  o(1), \quad \int_{x_1>0} |\Psi_2(x)|^2 \dd x = 1+o(1),\]
as $h \to 0$.
\end{enumerate}
\end{corollary}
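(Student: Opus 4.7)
The proof reduces to a direct diagonalization of the $2\times 2$ matrix from Theorem~\ref{thm.doublewell}, combined with the exponential localization of $\Psi_\ell, \Psi_r$ provided by the Agmon estimate \eqref{eq.Agmon}. The plan is to write the two eigenfunctions as $\Psi_j = \alpha_j^{(1)} e_\ell + \alpha_j^{(2)} e_r$, where $(\alpha_j^{(1)}, \alpha_j^{(2)})$ is the $j$-th normalized eigenvector of $M$; using $e_\star = \Psi_\star + \mathscr{O}(h^\infty)$, this becomes $\Psi_j = \alpha_j^{(1)} \Psi_\ell + \alpha_j^{(2)} \Psi_r + \mathscr{O}(h^\infty)$ in $L^2$. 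It then suffices to compute $|\alpha_j^{(2)}|^2$ in each regime and to control the $L^2$ overlap of $\Psi_\ell$ and $\Psi_r$ on $\{x_1 > 0\}$.

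Setting $\delta = \tfrac{\mu_\ell - \mu_r}{2}$, the eigenvalues of $M_0 := \begin{pmatrix} \mu_\ell & w \\ \overline w & \mu_r \end{pmatrix}$ are $\tfrac{\mu_\ell + \mu_r}{2} \pm \sqrt{\delta^2 + |w|^2}$, with spectral gap $2\sqrt{\delta^2 + |w|^2} \geq 2\max(|\delta|,|w|)$. In both regimes of the corollary this gap dominates the $o(w)$ perturbation appearing in Theorem~\ref{thm.doublewell}, so standard perturbation theory for Hermitian $2\times 2$ matrices shows that the normalized eigenvectors of $M$ differ from those of $M_0$ by $o(1)$. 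Explicit diagonalization of $M_0$ then yields: in case~(1), where $|\delta| \ll |w|$, one has $|\alpha_j^{(1)}|^2 = |\alpha_j^{(2)}|^2 = \tfrac 12 + o(1)$ for $j=1,2$; in case~(2), where $|\delta| \gg |w|$ and $\delta < 0$, one finds $|\alpha_1^{(1)}|^2 = 1+o(1)$, $|\alpha_1^{(2)}|^2 = o(1)$, and symmetrically $|\alpha_2^{(1)}|^2 = o(1)$, $|\alpha_2^{(2)}|^2 = 1+o(1)$.

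Finally, the Agmon bound \eqref{eq.Agmon} applied to $\Psi_\ell$ and $\Psi_r$, together with the assumption $L > (2+\sqrt 6)a$ which ensures that the Agmon distances $d_\ell$, $d_r$ separate the two wells, will give
\[ \int_{x_1>0} |\Psi_\ell|^2 \,\dd x + \int_{x_1<0} |\Psi_r|^2\,\dd x + \Bigl|\int_{\R^2} \Psi_\ell\, \overline{\Psi_r}\,\dd x\Bigr| \leq C e^{-c/h} \]
for some constants $c,C>0$. Expanding $|\Psi_j|^2 = |\alpha_j^{(1)}|^2|\Psi_\ell|^2 + |\alpha_j^{(2)}|^2|\Psi_r|^2 + 2\Re\bigl(\alpha_j^{(1)}\overline{\alpha_j^{(2)}}\,\Psi_\ell\overline{\Psi_r}\bigr) + \mathscr{O}(h^\infty)$ and integrating over $\{x_1 > 0\}$ then gives $\int_{x_1 > 0}|\Psi_j|^2\,\dd x = |\alpha_j^{(2)}|^2 + o(1)$, so each part of the corollary follows from the corresponding value of $|\alpha_j^{(2)}|^2$ computed above. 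I do not anticipate a real obstacle: the argument is a linear-algebra computation in $\C^2$ lifted to $L^2$ via decay estimates already in hand. The only point requiring mild attention is to check that the $o(w)$ perturbation of the matrix does not rotate the eigenvectors by more than $o(1)$, which is automatic from the spectral-gap estimate above.
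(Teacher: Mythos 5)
Your proposal is correct and follows essentially the same route as the paper: diagonalize the $2\times 2$ matrix $M$ (with the $o(w)$ perturbation controlled by the spectral gap $\sqrt{(\mu_\ell-\mu_r)^2+4|w|^2}\geq 2|w|$), read off the eigenvectors in the two regimes, and conclude via the exponential localization of $\Psi_\ell$ and $\Psi_r$ away from their respective wells. The only cosmetic remark is that the localization of $\Psi_\star$ on the opposite half-plane needs only the positivity of $d_\star$ (i.e.\ $L/2>a$), not the condition $L>(2+\sqrt6)a$, which enters solely through Theorem~\ref{thm.doublewell}.
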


\begin{remark}
By symmetry, when $|\mu_\ell - \mu_r | \gg w$ with $\mu_\ell > \mu_r$, the ground state will be localized in the right well instead. The ground state is localized in the deeper well with the smaller ground-state energy, and the first excited state is correspondingly localized in the other one.
\end{remark}
\begin{proof}
The matrix $M$ introduced in Theorem \ref{thm.doublewell} has the eigenvalues
\[ \lambda_\pm =  \frac{\mu_\ell + \mu_r}{2} \pm \frac 12 \sqrt{(\mu_\ell - \mu_r)^2 + 4 |w|^2} +  o(w)\]
with the positive spectral gap $\lambda_+ - \lambda_- = \sqrt{(\mu_\ell - \mu_r)^2 + 4 |w|^2} + o(|w|) >0$. The corresponding eigenvectors satisfy
\[ v_\pm = \left( \frac{\mu_\ell - \mu_r}{2|w|} \mp \frac{\sqrt{(\mu_\ell - \mu_r)^2 + 4 |w|^2}}{2 |w|}, 1 \right)^\texttt{T} + o(\| v_\pm \|).\]
Consequently, in case $(1)$ we get
\[
\lambda_{\pm} = \frac{\mu_\ell + \mu_r}{2} \pm |w| + o(|w|),
\]
and the normalized eigenvectors are
\[ \frac{v_\pm}{\| v_\pm \|} = \frac{1}{\sqrt 2} {\mp 1 \choose 1} + o(1). \]
It means that the first two eigenfunctions of $\mathscr{L}$ are of the form
\[ \Psi_\pm = \frac{1}{\sqrt{2}}( e_\ell \mp e_r ) + o(1) = \frac{1}{\sqrt{2}}( \Psi_\ell \mp \Psi_r ) + o(1), \]
and the result follows since $\Psi_\ell$ and $\Psi_r$ are exponentially localized near $x=x_\ell$ and $x=x_r$, respectively.

In contrast, case (2) offers a different picture, the eigenvalues and eigenvectors being
\[ \lambda_- = \mu_\ell + \mathscr{O} \Big( \frac{|w|^2}{|\mu_\ell - \mu_r|} \Big), \quad \lambda_+ = \mu_r + \mathscr{O} \Big( \frac{|w|^2}{|\mu_\ell - \mu_r|} \Big),\]
and
\[ \frac{v_-}{\| v_- \|} = {1\choose 0} + o(1), \quad  \frac{v_+}{\| v_+ \|} = {0\choose 1} + o(1).\]
Hence the two first eigenfunctions of $\mathscr{L}$ are
\[ \Psi_1 = e_\ell + o(1) = \Psi_\ell + o(1), \quad \Psi_2 = e_r+ o(1) = \Psi_r + o(1),\]
and the result follows again.
\end{proof}

\subsection{Application: a magnetic flea.} Let us now turn to the effect indicated in the introduction and consider a slight perturbation of one of the wells in the originally symmetric pair -- for definiteness we choose the right one -- and inspect the effect on the eigenfunctions. For simplicity we restrict ourself to the situation when the perturbation is radially symmetric and at the same time supported away from the well center -- our flea is thus weak but not local its support being an annular set -- although \emph{we conjecture that the effect will persist in the absence of the radial symmetry hypothesis}. We thus consider magnetic fields
\[ B_\ell(x) = B^0(x-x_\ell), \quad B_r(x) = B^0(x-x_r) + t \beta(x-x_r),\]
where $t$ is a small parameter, and $B^0$, $\beta$ are radial functions satisfying the above assumptions.
\begin{theorem}\label{thm.mag.flea}
Assume that $\beta\geq 0$ is supported in $D(0,a) \setminus \lbrace 0 \rbrace$ and denote
\[R= \inf \lbrace |x|, x\in {\rm supp} \, \beta \rbrace >0.\]
Then for any $\varepsilon >0$ and $h$ small enough we have
\[ t\, e^{- (1+\varepsilon) \frac{d_\ell (R)}{h}} \leq | \mu_\ell - \mu_r | \leq t\, e^{-(1-\varepsilon) \frac{d_\ell(R)}{h}} .\]
\end{theorem}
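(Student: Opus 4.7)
The natural approach is a Hellmann--Feynman/variational argument combined with the Agmon-type decay bound \eqref{eq.Agmon} of Proposition~\ref{prop.WKB}. I would translate so that $x_r=0$ and pick the transverse (Coulomb) gauge for both $B^0$ and $\beta$, writing $\mathscr{L}_r(t)=(ih\nabla+A^0+tA_\beta)^2$ with $A_\beta(x)=a_\beta(|x|)\hat\theta$ and $a_\beta(r)=\frac1{2\pi r}\int_{D(0,r)}\beta\,\dd y$; in this gauge $A_\beta$ vanishes identically on $\{|x|<R\}$, which is the single feature that drives the estimate. Since $\mathscr{L}_r(0)$ is unitarily equivalent to $\mathscr{L}_\ell$, one has $\mu_\ell=\mu_r(0)$, and the common ground state $\Psi:=\Psi_r(0)$ is radial and satisfies \eqref{eq.Agmon} with weight $d_\ell$.

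For the upper bound I would use $\Psi$ as a trial function for $\mathscr{L}_r(t)$, expand
\[
\langle\Psi,\mathscr{L}_r(t)\Psi\rangle=\mu_\ell+2t\,\mathrm{Re}\langle(ih\nabla+A^0)\Psi,A_\beta\Psi\rangle+t^2\|A_\beta\Psi\|^2,
\]
and apply Cauchy--Schwarz to the cross term. The factor $\|(ih\nabla+A^0)\Psi\|^2=\mu_\ell=\mathscr{O}(h)$ is small in $h$; since $A_\beta$ is bounded and supported in $\{|x|\ge R\}$, the estimate \eqref{eq.Agmon} gives $\|A_\beta\Psi\|\le Ce^{-(1-\varepsilon)d_\ell(R)/h}$. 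Absorbing polynomial-in-$h$ prefactors into an arbitrarily small additional $\varepsilon$-loss, this yields $\mu_r(t)\le\mu_\ell+te^{-(1-\varepsilon)d_\ell(R)/h}$. The opposite inequality $\mu_\ell-\mu_r(t)\le te^{-(1-\varepsilon)d_\ell(R)/h}$ comes from the same argument with roles swapped, using $\Psi_r(t)$ as a trial function for $\mathscr{L}_\ell$ and noting that its Agmon weight differs from $d_\ell$ by $\mathscr{O}(t)$.

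For the lower bound I would exploit the exact Hellmann--Feynman identity
\[
\mu_r(t)-\mu_\ell=\int_0^t\langle\Psi_r(s),\partial_s\mathscr{L}_r(s)\Psi_r(s)\rangle\,\dd s.
\]
Because $\Psi_r(s)$ is radial by Theorem~\ref{thm.single.magnetic.well} and both $A^0,A_\beta$ are tangential in the transverse gauge, a short computation shows $\partial_s\mu_r(s)=4\pi\int_0^\infty\bigl(a_{A^0}(r)+s\,a_\beta(r)\bigr)a_\beta(r)|\Psi_r(s,r)|^2\,r\,\dd r\ge0$, so $\mu_r(t)\ge\mu_\ell$. Since $R=\inf\{|x|:x\in\operatorname{supp}\beta\}$ and $\beta\ge 0$, there exists a subinterval $[r_1,r_2]\subset(R,a)$ on which $\beta>0$, hence $a_\beta\ge c_0>0$ there. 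Restricting the integration to $[r_1,r_2]$ and inserting the pointwise WKB lower bound of Proposition~\ref{prop.WKB} for $|\Psi_r(s,r)|^2$ gives $\partial_s\mu_r(s)\ge c\,e^{-(1+\varepsilon)d_\ell(R)/h}$ once $r_1$ is chosen close enough to $R$ and all polynomial prefactors are swallowed by the $\varepsilon$-loss; integrating in $s$ produces $|\mu_\ell-\mu_r|\ge t\,e^{-(1+\varepsilon)d_\ell(R)/h}$.

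The main technical step is to make both \eqref{eq.Agmon} and the WKB approximation of Proposition~\ref{prop.WKB} uniform in $s\in[0,t]$. The eikonal $d_s$ associated with $B^0+s\beta$ satisfies $d_s(r)=d_\ell(r)+\mathscr{O}(s)$ uniformly on compact sets, so this is essentially a perturbation-of-eikonal argument, but one has to check that the constants in the decay and WKB remainder estimates depend on $B^0+s\beta$ only through quantities that remain bounded as $s\to 0$; this is where I expect the bookkeeping to be most delicate. With that uniformity in hand, both halves of the estimate close with the stated exponential rates.
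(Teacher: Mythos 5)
Your overall strategy is essentially the paper's: center the problem, exploit radiality of the ground states, obtain the upper bound variationally from the unperturbed ground state together with its exponential decay, and obtain the lower bound from first-order perturbation theory in $t$ using positivity of the perturbation. The one structural difference is that you integrate the Hellmann--Feynman identity over $s\in[0,t]$, which is what forces the uniformity in $s$ of the WKB/Agmon bounds that you yourself flag as the delicate point; the paper instead works at fixed $t$ with the identity $\delta\mu\,(1+\langle\delta\psi,\psi_0\rangle)=\langle\mathscr{R}(\psi_0+\delta\psi),\psi_0\rangle$ together with the a priori bounds $\|\psi_t-\psi_0\|=\mathscr{O}(t\,e^{-d_\ell(R)/h})$, so that only estimates for the \emph{unperturbed} ground state are needed -- this is the cheaper route and avoids your uniformity issue entirely. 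Also, the second half of your upper bound (using $\Psi_r(t)$ as a trial state for $\mathscr{L}_\ell$) is superfluous: your own monotonicity observation $\partial_s\mu_r(s)\geq 0$ already gives $\mu_r(t)\geq\mu_\ell$, and in fact this sign information is implicit in the paper's argument as well.

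There is, however, a genuine quantitative gap in your lower bound. On the support of $a_\beta$, i.e.\ for $r\geq R$, Proposition~\ref{prop.WKB} gives $|\Psi_r(s,r)|^2\asymp h^{-1}e^{-2d_\ell(r)(1+o(1))/h}$: the \emph{square} of the eigenfunction carries the exponent $2d_\ell$, not $d_\ell$. Hence what your Hellmann--Feynman computation actually yields is $\partial_s\mu_r(s)\geq c\,e^{-2(1+\varepsilon)d_\ell(R)/h}$, i.e.\ $\mu_r-\mu_\ell\geq t\,e^{-2(1+\varepsilon)d_\ell(R)/h}$; the step ``inserting the pointwise WKB lower bound for $|\Psi_r(s,r)|^2$ gives $\partial_s\mu_r(s)\geq c\,e^{-(1+\varepsilon)d_\ell(R)/h}$'' is a non sequitur, and no choice of $r_1$ close to $R$ or absorption of polynomial prefactors removes the factor $2$. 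Nor can a better argument recover the single-$d_\ell$ rate: evaluating your cross term exactly (for the real radial ground state and the purely angular $A_\beta$ one has $2t\,\mathrm{Re}\langle(ih\nabla+A^0)\Psi,A_\beta\Psi\rangle=2t\int a_{A^0}a_\beta|\Psi|^2\,\dd x$), the min--max upper bound becomes $0\leq\mu_r-\mu_\ell\leq C\,t\,e^{-2(1-\varepsilon)d_\ell(R)/h}$, so the splitting genuinely lives at the exponential scale $e^{-2d_\ell(R)/h}$. To be fair, your bookkeeping mirrors the paper's own proof, which makes the same jump when it bounds $\int_{R+\varepsilon}^{a}(\cdots)|\psi_0(r)|^2\,r\,\dd r$ from below by $C_\varepsilon t\,e^{-d(R+\varepsilon)(1+\varepsilon)/h}$; read with $2d_\ell(R)$ in both exponents, Theorem~\ref{thm.mag.flea} and your argument are consistent, and the dichotomy of Corollary~\ref{cor.mgflea} survives with the threshold adjusted accordingly. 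As written, though, your lower bound does not establish the stated inequality.
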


\smallskip

The flea-on-the-elephant effect then follows from the last theorem in combination with Corollary \ref{coro1}. We define the quantity
\[
I_0 = \frac{M}{2} - \frac{b_1L^2}{8} + \frac{b_1L}{2} \sqrt{\frac{L^2}{4}-\frac{M}{b_1}} - M \ln \left( 1 + \sqrt{ 1 - \frac{4M}{b_1 L^2}}\right),
\]
where $L$ is the distance of the well centers, $b_1$ is the `background' magnetic field intensity, and $M = \frac{1}{2\pi} \int_{\R^2} (2b_1 - B^0(x-x_\ell) - B^0(x-x_r))\, \dd x$. Obviously, $I_0$ is the leading term of the quantity $I$ from Theorem \ref{thm.doublewell}, $I = I_0 + \mathscr{O}(t)$.

\begin{corollary} \label{cor.mgflea}
Let $L> (2+\sqrt 6)a$, then under the assumption of Theorem \ref{thm.mag.flea} we have the following dichotomy:
\begin{enumerate}
\item If $t \leq e^{- (1+\varepsilon)\frac{ 2 d_\ell (L/2) - d_\ell(R) + I_0}{h}}$ holds for some $\varepsilon >0$, then the normalized eigen\-functions $\Psi_1$ and $\Psi_2$ of $\mathscr{L}$ are localized in both wells roughly equally,
\[ \int_{x_1 >0} |\Psi_j(x)|^2 \dd x = \frac 12 + o(1),\quad j=1,2,\]
as $h \to 0$.
\item In contrast, if $h \gg t \geq e^{- (1-\varepsilon)\frac{ 2 d_\ell (L/2) - d_\ell(R) + I_0}{h}}$ holds for some $\varepsilon >0$, then each of the normalized eigenfunctions $\Psi_1$ and $\Psi_2$ of $\mathscr{L}$ is localized dominantly in a single well,
\[ \int_{x_1>0} |\Psi_1(x) |^2 \dd x = o(1), \qquad \int_{x_1>0} |\Psi_2(x)|^2 \dd x = 1+ o(1),\]
as $h \to 0$.
\end{enumerate}
\end{corollary}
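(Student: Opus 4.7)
The plan is to combine Theorem~\ref{thm.mag.flea} with Corollary~\ref{coro1}, so the task reduces to comparing the two exponentially small quantities $|\mu_\ell-\mu_r|$ and $|w|$ in each of the two regimes.

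First, I would verify the hypotheses of Theorem~\ref{thm.doublewell}: the geometric bound $L>(2+\sqrt{6})a$ is assumed, and the upper bound of Theorem~\ref{thm.mag.flea} combined with $t\leq h$ yields
\[ |\mu_\ell - \mu_r|\;\leq\; t\, e^{-(1-\varepsilon)d_\ell(R)/h}\;=\; o(h^2) \]
in both cases. Next I would identify the exponential rate $S$ of $|w|$ in the perturbed setting. Because $\beta$ is radial, supported in $D(0,a)\setminus\{0\}$, and $B^0\equiv b_1$ outside $D(0,a)$, a direct inspection of the definition of $d_r$ in Proposition~\ref{prop.WKB} and of the closed formula for $I$ in Theorem~\ref{thm.doublewell} gives
\[ d_r(L/2)=d_\ell(L/2)+\mathscr{O}(t),\qquad I=I_0+\mathscr{O}(t), \]
so that $S=S_0+\mathscr{O}(t)$ with $S_0:=2d_\ell(L/2)+I_0$ and, since $t=o(h)$ in both regimes, $|w|=c'_0\, h^{\nu}e^{-S_0/h}(1+o(1))$ with $c'_0\neq 0$.

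For case~(1) I would multiply the upper bounds (on $|\mu_\ell-\mu_r|$ from Theorem~\ref{thm.mag.flea} and on $t$ from the hypothesis); setting $\delta:=2d_\ell(L/2)+I_0-2d_\ell(R)$, the exponents combine to
\[ |\mu_\ell-\mu_r|\;\leq\; e^{-S_0/h}\,e^{-\varepsilon\delta/h}. \]
Since $d_\ell$ is strictly increasing and $R<a<L/2$, and since $I_0>0$, the number $\delta$ is strictly positive, whence $|\mu_\ell-\mu_r|=o(|w|)$ and Corollary~\ref{coro1}(1) delivers the claimed equipartition. For case~(2) the symmetric computation, now combining the lower bound from Theorem~\ref{thm.mag.flea} with the lower bound on $t$, gives $|\mu_\ell-\mu_r|\geq e^{-S_0/h}e^{+\varepsilon\delta/h}\gg|w|$. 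Because $\beta\geq 0$ raises the ground-state energy of the right well (a sign that should be read off the leading-order calculation underlying Theorem~\ref{thm.mag.flea}), one has $\mu_\ell<\mu_r$, and Corollary~\ref{coro1}(2) then yields the one-sided localization.

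The most delicate point I anticipate is the careful control of the $\mathscr{O}(t/h)$ corrections hidden in the exponent of $|w|$: they must be absorbed by the strictly positive gap $\varepsilon\delta/h$, which is automatic as soon as $t=o(h)$, true in both regimes. Apart from this bookkeeping, the proof is simply an exponential comparison followed by a direct application of Corollary~\ref{coro1}.
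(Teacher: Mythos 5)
Your argument is correct and takes essentially the same route as the paper, whose proof is exactly this: compare $|\mu_\ell-\mu_r|$ from Theorem \ref{thm.mag.flea} with $w$ from Theorem \ref{thm.doublewell} and invoke Corollary \ref{coro1}, letting the $\varepsilon$ in the exponents absorb the power $h^\nu$ and the $\mathscr{O}(t)$ shift of $S$. Your explicit bookkeeping of the positive margin $\delta$ and of the sign $\mu_\ell<\mu_r$ (needed for the specific assignment of $\Psi_1,\Psi_2$ in case (2), and left implicit in the paper via the positivity of $\langle \mathscr{R}\psi_0,\psi_0\rangle$ in the proof of Theorem \ref{thm.mag.flea}) merely spells out what the paper's two-line proof leaves to the reader.
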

In particular, since the right-hand side of the inequality in case (2) approaches zero faster than the left-hand one as $h \to 0$, even a very small perturbations of a symmetric double-well magnetic field can dramatically change the behavior of the eigenfunctions at the bottom of the spectrum.
\begin{proof}[Proof of Theorem \ref{thm.mag.flea}]
Since the perturbation is radially symmetric, by centering the single-well problems we see that $\mu_\ell=\mu_0$ and $\mu_r=\mu_t$ are the ground state energies of
\[ \mathscr{M}_0 = (ih\nabla + A^0)^2, \quad {\rm and} \quad \mathscr{M}_t = (ih \nabla + A^0 + t A^1)^2 \]
respectively, where $A^0(x) = \int_{[0,x]} B^0(y)y^\perp \,\dd y$ and $A^1(x) = \int_{[0,x]} \beta(y) y^\perp \,\dd y$. Moreover, by Theorem \ref{thm.single.magnetic.well} the ground-state eigenfunctions are independent of angle, hence $\mu_\ell$ and $\mu_r$ are the ground state energies of the following radial operators,
\[\mathscr P_0 = - \frac{h^2}{r} \partial_r r \partial_r + \frac{a(r)^2}{r^2}, \quad \mathscr{P}_t =  - \frac{h^2}{r} \partial_r r \partial_r + \frac{(a(r) + t \alpha(r))^2}{r^2},\]
respectively, where $a(r) :=\int_0^r B^0(s)s \dd s$ and $\alpha(r) := \int_0^r \beta(s)s\dd s$. We denote the associated normalized eigenfunctions by $\psi_0$ and $\psi_t$, i.e.
\[ \mathscr{P}_0 \psi_0 = \mu_\ell \psi_0, \quad \mathscr{P}_t \psi_t = \mu_r \psi_t. \]
Note that the perturbation has exponentially small effect,
\[ (\mathscr{P}_t - \mu_0) \psi_0 = \frac{2t \alpha(r)a(r) + t^2 \alpha(r)^2}{r^2} \psi_0 = \mathscr{O}(t\,e^{- \frac{d_\ell(R)}{h}}),\]
since $\alpha$ is by assumption supported on $[R,\infty)$ and the function $\psi_0$ decays by Theorem~\ref{thm.single.magnetic.well} like $\exp(- d_\ell(r)/h)$; note that $d_r>d_\ell$. Using the min-max principle we get $\inf\mathrm{spec}(\mathscr{P}_t-\mu_0)\leq \langle\psi_0, (\mathscr{P}_t-\mu_0) \psi_0\rangle$ so that
\[ | \mu_r - \mu_\ell | = \mathscr{O}\big(t\,e^{- \frac{d_\ell(R)}{h}}\big) \]
holds as $h\to 0$. Since the perturbation is bounded and the ground-state eigenvalue of a single well system is simple, we also have
\[ \| \psi_t - \psi_0 \| = \mathscr{O}\big(t\,e^{- \frac{d_\ell(R)}{h}}\big),\]
which verifies the second inequality of Theorem~\ref{thm.mag.flea}.

To get a lower bound on the eigenvalue gap, we use a perturbation argument again. We write $\mathscr{P}_t = \mathscr{P}_0 + \mathscr{R}$, $\delta \mu = \mu_r - \mu_\ell$ and $\delta \psi = \psi_t - \psi_0$, then the eigenvalue equation reads
\[( \mathscr{P}_0 + \mathscr{R})(\psi_0 + \delta \psi) = (\mu_0 + \delta \mu) (\psi_0 + \delta \psi),\]
which implies
\[ (\mathscr{P}_0 - \mu_0) \delta \psi + \mathscr{R} (\psi_0 + \delta \psi) = \delta \mu (\psi_0 + \delta \psi). \]
Next we take the scalar product with $\psi_0$ to find that
\[ \delta \mu ( 1 + \langle \delta \psi, \psi_0 \rangle) = \langle \mathscr{R}(\psi_0 + \delta \psi), \psi_0 \rangle .\]
Furthermore, we have $\langle \delta \psi, \psi_0 \rangle = \mathscr{O}\big(e^{- \frac{d_\ell(R)}{h}}\big)$ and
\[ | \langle \mathscr R \delta \psi, \psi_0 \rangle | \leq \| \delta \psi \| \| \mathscr{R} \psi_0 \| = \mathscr{O}(e^{- \frac{2d_\ell(R)}{h}}),\]
where we have used Schwarz inequality together with the support properties of $\mathscr{R}$ and the decay of $\psi_0$ again. This yields
\begin{equation}\label{eq.deltamu}
 \delta \mu = \langle \mathscr{R} \psi_0, \psi_0 \rangle \big( 1 + \mathscr{O}(e^{- \frac{d_\ell(R)}{h}}) \big) +\mathscr{O}(e^{- \frac{2d_\ell(R)}{h}}).
\end{equation}
The leading term can be written explicitly as
\[ \langle \mathscr{R} \psi_0, \psi_0 \rangle = \int_{R}^a \Big( \frac{2t\alpha(r)a(r) + t^2\alpha(r)^2}{r^2} \Big) |\psi_0(r)|^2 r \dd r > 0,\]
which means that for any $\varepsilon >0$ we have the lower bound
\[ \langle \mathscr{R} \psi_0, \psi_0 \rangle \geq \int_{R+\varepsilon}^a \Big( \frac{2t\alpha(r)a(r) + t^2\alpha(r)^2}{r^2} \Big) |\psi_0(r)|^2 r \dd r.\]
Now we use the Laplace method, the exponential decay of $\psi_0$ and the fact that the integrated function is positive near $R+\varepsilon$, obtaining
\[ \langle \mathscr{R} \psi_0, \psi_0 \rangle \geq C_\varepsilon t e^{-\frac{d(R+\varepsilon)(1 + \varepsilon)}{h}}, \]
where the additional $(1+\varepsilon)$ was used to absorb the powers of $h^{-1}$ coming from the WKB approximation of Proposition \ref{prop.WKB}. Using \eqref{eq.deltamu}, we deduce that
\[ |\delta \mu | \geq C_\varepsilon t e^{- \frac{d(R+\varepsilon)(1+\varepsilon)}{h}},\]
and the result follows since $\varepsilon$ was arbitrary.
\end{proof}

\begin{proof}[Proof of Corollary~\ref{cor.mgflea}]
In view of Corollary \ref{coro1} it is sufficient to compare the gap $|\mu_r - \mu_\ell|$ to the quantity $w$ of Theorem \ref{thm.doublewell}; the $\varepsilon$ in the exponent can again absorb in the limit $h\to 0$ the power of $h$ which $w$ may contain.
\end{proof}

\section{Proof of Theorem \ref{thm.doublewell}}
\label{sec.proof}

Let us now return to the result on which all the previous discussion depends. The proof scheme is adapted from \cite{FMR}, where the case of mirror-symmetric wells was considered.

\subsection{Construction of quasimodes}

To change the  gauge from $A$ to $A_\ell$ or $A_r$, we introduce the functions $\sigma_\ell$ and $\sigma_r$ defined by
\begin{equation}\label{def.sigma}
 \sigma_\star (x) = \int_{[0,x]} (A-A_\star) \cdot {\rm d}s,
 \end{equation}
where $[0,x]$ is the segment connecting the point $x\in\R^2$ with the origin of coordinates and the dot means scalar product. Note that they satisfy
\begin{equation}
\begin{cases}
\nabla \sigma_\ell = A - A_\ell  &{\rm for} \quad x_1 < \frac L2 - a,\\
\nabla \sigma_r = A-A_r &{\rm for} \quad x_1 > - \frac L2 + a.
\end{cases}
\end{equation}
In particular, for all $\Psi$ we have
\begin{equation}\label{eq.gaugechange}
 \mathscr{L}( e^{i \frac{\sigma_\ell}{h}} \Psi) = e^{i\frac{\sigma_\ell}{h}} \mathscr{L}_\ell \Psi, \quad {\rm on} \quad x_1 < \frac L2 - a,
 \end{equation}
and a similar result holds also for $\mathscr{L}_r$.

Using rough decay estimates on the eigenfunctions, we easily find that the low-lying spectrum of the double-well operator $\mathscr{L}$ is situated close to $\lbrace \mu_{\ell,1}, \mu_{r,1} \rbrace$.
\begin{lemma}\label{lem.roughestimates}
Assume that $\lambda \in \mathrm{spec}( \mathscr{L})$ is such that for all $h$ small enough the inequalities $|\lambda - \mu_{\ell,1}| < |\lambda - \mu_{\ell,2}|$ and $|\lambda - \mu_{r,1}| < |\lambda - \mu_{r,2}|$ hold. Then
\[ \min \big( |\lambda - \mu_{\ell,1} |, |\lambda - \mu_{r,1}| \big) = \mathscr{O}(h^\infty).\]
\end{lemma}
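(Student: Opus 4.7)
The plan is to use the standard quasimode argument: localize any low-lying eigenfunction of $\mathscr{L}$ near one of the two wells via cutoffs, perform the appropriate gauge transformation to convert it into a test function for $\mathscr{L}_\ell$ or $\mathscr{L}_r$, and then appeal to the spectral theorem for those single-well operators.

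First I would establish a rough Agmon-type decay estimate: any eigenfunction $\Psi$ of $\mathscr{L}$ with eigenvalue $\lambda \leq \Lambda$ is exponentially localized in any fixed neighborhood of $\{x_\ell, x_r\}$. Outside the support of $B-b_1$, the symbol of $\mathscr{L}$ is bounded below by $hb_1$, which is strictly above $\lambda = h\min B_\star + \mathscr{O}(h^2)$, so a standard magnetic Agmon argument (using a weight $\Phi$ that grows away from $x_\ell$ and $x_r$) yields $\|e^{\Phi/h}\Psi\| = \mathscr{O}(1)$. In particular, for any compactly supported smooth function $\eta$ vanishing on neighborhoods of $x_\ell$ and $x_r$, one has $\|\eta\Psi\| = \mathscr{O}(h^\infty)$.

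Next I would introduce smooth cutoffs $\chi_\ell, \chi_r$ with $\chi_\star \equiv 1$ near $x_\star$, supported in $\{|x-x_\star|< L/2-a/2\}$ (so in particular in the region where the gauge equation \eqref{eq.gaugechange} applies), and such that $\chi_\ell^2 + \chi_r^2 = 1$ on a large neighborhood of the two wells. By the decay estimate, $\|\chi_\ell\Psi\|^2 + \|\chi_r\Psi\|^2 = 1 + \mathscr{O}(h^\infty)$, so at least one of these norms, say $\|\chi_\ell\Psi\|$, is bounded below by $1/2$. Setting $\varphi_\ell := e^{-i\sigma_\ell/h}\chi_\ell\Psi$, the gauge identity \eqref{eq.gaugechange} together with $\mathscr{L}\Psi = \lambda\Psi$ gives
\[
(\mathscr{L}_\ell - \lambda)\varphi_\ell = e^{-i\sigma_\ell/h}[\mathscr{L},\chi_\ell]\Psi.
\]
The commutator $[\mathscr{L},\chi_\ell]$ is a first-order differential operator in $h\nabla + iA$ supported on $\mathrm{supp}(\nabla\chi_\ell)$, which is disjoint from both wells, so by the decay estimate (applied to both $\Psi$ and, via a slightly larger weight, to $(-ih\nabla - A)\Psi$) the right-hand side is $\mathscr{O}(h^\infty)$ in $L^2$.

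Finally I would apply the spectral theorem to $\mathscr{L}_\ell$: since $\|\varphi_\ell\|$ is bounded below,
\[
\mathrm{dist}\bigl(\lambda, \mathrm{spec}(\mathscr{L}_\ell)\bigr) = \mathscr{O}(h^\infty).
\]
Under the hypothesis $|\lambda - \mu_{\ell,1}| < |\lambda - \mu_{\ell,2}|$, together with the fact that the spectrum of $\mathscr{L}_\ell$ between $\mu_{\ell,2}$ and $hb_1$ is bounded away from $\lambda$ (from Theorem~\ref{thm.single.magnetic.well} and the essential-spectrum threshold), the nearest point of $\mathrm{spec}(\mathscr{L}_\ell)$ to $\lambda$ is $\mu_{\ell,1}$, so $|\lambda - \mu_{\ell,1}| = \mathscr{O}(h^\infty)$. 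The symmetric case $\|\chi_r\Psi\| \geq 1/2$ yields $|\lambda-\mu_{r,1}| = \mathscr{O}(h^\infty)$. The main obstacle is the Agmon step: one needs an exponential weight that decays from \emph{both} wells while remaining compatible with the piecewise definition \eqref{def.B} of $B$, and one also needs it to control the gradient term entering the commutator estimate; this is routine for magnetic Laplacians with a positive magnetic field in a neighborhood of infinity, but has to be set up carefully.
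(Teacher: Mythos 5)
Your proposal follows essentially the same route as the paper's proof: cut off the eigenfunction near each well, use the gauge relation \eqref{eq.gaugechange} to turn it into a quasimode for $\mathscr{L}_\ell$ or $\mathscr{L}_r$, bound the resulting commutator term by $\mathscr{O}(h^\infty)$ via Agmon decay, and conclude with the spectral theorem using the hypothesis that $\mu_{\star,1}$ is the nearest spectral point. The only (immaterial) difference is bookkeeping: you select the well carrying at least half the mass, while the paper uses a quadratic partition $\theta_\ell^2+\theta_r^2=1$ and bounds the minimum of the two distances in a single summed inequality.
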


\smallskip

\noindent Note that the condition on $\lambda$ is meaningful when $\mu_{\ell,1}$ and $\mu_{r,1}$ are close to each other (up to $o(h^2)$). This condition makes sense since our aim is to study cases when the right well is a perturbation of the left one. In particular, it is satisfied for $\lambda < \Lambda$; recall that $\Lambda$ is the minimum of the midpoints of the gaps between the two first eigenvalues left and right.
\begin{proof}
Let $\Psi$ be an eigenfunction of $\mathscr{L}$ with eigenvalue $\lambda$. We use two cutoff functions $\theta_\ell$ and $\theta_r$, smooth so that $\theta_\star\Psi$ belongs to the domain of $\mathscr{L}_\star$, and such that $\theta_\ell^2 + \theta_r^2 =1$ and $\theta_\star$ equals one on a neighborhood of the well $D(x_\star, a)$. Using spectral theorem in combination with the fact that the closest eigenvalue of $\mathscr{L}_\star$ to $\lambda$ is $\mu_{\star,1}$, we get the estimate
\begin{align*}
 \| (\mathscr{L}_\ell - \lambda) e^{-i \frac{\sigma_\ell}{h}} \theta_\ell \Psi \|^2 +  \| (\mathscr{L}_r - \lambda) e^{-i \frac{\sigma_r}{h}} \theta_r \Psi \|^2 &\geq  |\lambda - \mu_{\ell,1}| \| \theta_\ell \Psi \|^2 + |\lambda - \mu_{r,1}| \| \theta_r \Psi \|^2  \\
 &\geq \min \big( |\lambda - \mu_{\ell,1}|, |\lambda - \mu_{r,1} | \big) \|\Psi \|^2.
 \end{align*}
Using further relation \eqref{eq.gaugechange} which holds on the support of $\theta_\ell$ and its right analogue, we arrive at
\begin{equation}\label{eq.1629}
\| (\mathscr{L}-\lambda) \theta_\ell \Psi \|^2 + \| (\mathscr{L}-\lambda) \theta_r \Psi \|^2 \geq \min \big( |\lambda - \mu_{\ell,1}|, |\lambda - \mu_{r,1} | \big) \|\Psi \|^2.
\end{equation}
Since $\lambda$ is by assumption an eigenvalue of $\mathscr{L}$, the two expressions on the left-hand side can be be bounded from above using the commutator,
\[ \| (\mathscr{L} - \lambda) \theta_\star \Psi \|^2 = \| \big[ \mathscr{L}, \theta_\star \big] \Psi \|^2.\]
The Agmon estimate (see, for instance, \cite[Sect.~1.4]{GBRVN21}) tells us that $\Psi$ decays exponentially away from the wells, which is where the commutators are supported. Therefore,
\[ \| (\mathscr{L} - \lambda) \theta_\star \Psi \|^2 = \mathscr{O}(h^\infty)\| \Psi \|^2,\]
which in combination with \eqref{eq.1629} yields the sought estimate.
\end{proof}

In order to get better estimates on the eigenvalues and eigenfunctions, we construct next an appropriate eigenbasis starting from the left and right well ground states, $\Psi_\ell$ and $\Psi_r$, respectively. We use again smooth enough cut-off functions; this time they will be $\chi_\ell$, $\chi_r$ such that, for a sufficiently small $\eta >0$ we have
\[ \chi_\ell(x_1) = \begin{cases}
1 &{\rm{if}} \quad x_1 < \frac L 2 - a - 2 \eta, \\
0 &{\rm{if}} \quad x_1 > \frac L2 - a-\eta,
\end{cases}
\]
and $\chi_r(x_1) = \chi_\ell(-x_1)$.Then we define
\begin{equation}\label{def.psil}
 \psi_\ell = e^{i \frac{\sigma_\ell}{h}}\chi_\ell \Psi_\ell, \qquad \psi_r = e^{i \frac{\sigma_r}{h}} \chi_r \Psi_r,
 \end{equation}
and furthermore, using the spectral projection $\Pi = \mathbf{1}_{(-\infty,\Lambda ]}\big( \mathscr{L} \big)$, we put
\[ g_\ell = \Pi \psi_\ell, \qquad g_r = \Pi \psi_r.\]
Then we claim that $\psi_\ell$ and $\psi_r$ are exponentially good quasimodes for $\mathscr{L}$, and $g_\ell$, $g_r$ are small perturbations of them.
\begin{lemma}\label{lem.g-psi}
For $\star \in \lbrace \ell, r \rbrace$, we have
\[ \| (\mathscr{L}-\mu_\star) \psi_\star \| = \mathscr{O}\left( e^{-d_\star \left(L - a - 3\eta\right)/h} \right), \quad \| \psi_\star \| = 1 + \mathscr{O}\left( e^{-d_\star \left(L - a - 3\eta\right)/h} \right).\]
Moreover, if $|\mu_\ell - \mu_r| = o(h^2)$, we have
\[ \langle \mathscr{L} (g_\star - \psi_\star), g_\star - \psi_\star \rangle =  \mathscr{O}\left( e^{-2 d_\star \left(L - a - 4\eta\right)/h} \right), \quad
\| g_\star - \psi_\star \| =  \mathscr{O}\left( e^{-d_\star \left(L - a - 4\eta\right)/h} \right). \]
\end{lemma}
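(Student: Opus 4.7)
The plan is to establish the two parts in sequence, both as consequences of the gauge-change identity \eqref{eq.gaugechange} combined with the exponential decay \eqref{eq.Agmon}. Any polynomial factor of $h$ that arises along the way I would absorb into the exponential by slightly shrinking the argument of $d_\star$, exploiting the strict monotonicity of $d_\star$ on the interval of interest.

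For the first part, since $\chi_\ell\Psi_\ell$ is supported in $\{x_1\leq L/2-a-\eta\}$, identity \eqref{eq.gaugechange} gives globally $\mathscr{L}\psi_\ell = e^{i\sigma_\ell/h}\mathscr{L}_\ell(\chi_\ell\Psi_\ell)$, and using $\mathscr{L}_\ell\Psi_\ell = \mu_\ell\Psi_\ell$ one finds
\begin{equation*}
(\mathscr{L}-\mu_\ell)\psi_\ell = e^{i\sigma_\ell/h}[\mathscr{L}_\ell,\chi_\ell]\Psi_\ell,
\end{equation*}
with $[\mathscr{L}_\ell,\chi_\ell] = -h^2\Delta\chi_\ell + 2ih\,\nabla\chi_\ell\cdot(ih\nabla+A_\ell)$ supported on the strip $\{L/2-a-2\eta<x_1<L/2-a-\eta\}$. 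On this strip $|x-x_\ell|\geq L-a-2\eta$, so combining \eqref{eq.Agmon} with its usual weighted $L^2$ companion for $\Psi_\ell$ yields a bound $\mathscr{O}(h\,e^{-(1-\epsilon)d_\ell(L-a-2\eta)/h})$ for each of the two pieces. Picking $\epsilon$ small enough that $(1-\epsilon)d_\ell(L-a-2\eta)>d_\ell(L-a-3\eta)$ absorbs both the $h$-prefactor and the Agmon loss. The norm estimate on $\psi_\ell$ is even more direct: $\|\psi_\ell\|^2 = 1-\|(1-\chi_\ell)\Psi_\ell\|^2$ and the remainder is controlled by the same Agmon argument.

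For the second part, under the hypothesis $\mu_\ell-\mu_r=o(h^2)$ Theorem~\ref{thm.single.magnetic.well} produces a spectral gap $\Lambda-\mu_\star\geq ch^2$ for some $c>0$ and all small $h$. Writing $g_\star-\psi_\star = -(1-\Pi)\psi_\star$ and applying the spectral theorem on $\mathrm{Ran}(1-\Pi)$, where $\mathscr{L}\geq\Lambda$, one has
\begin{equation*}
\|(1-\Pi)\psi_\star\| \leq (\Lambda-\mu_\star)^{-1}\|(\mathscr{L}-\mu_\star)\psi_\star\| = \mathscr{O}\!\left(h^{-2}e^{-d_\star(L-a-3\eta)/h}\right),
\end{equation*}
and the $h^{-2}$ is absorbed by relaxing the exponential to $d_\star(L-a-4\eta)$. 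For the quadratic form I would decompose $\mathscr{L}=\mu_\star+(\mathscr{L}-\mu_\star)$: the contribution $\mu_\star\|(1-\Pi)\psi_\star\|^2 = \mathscr{O}(h)\cdot\mathscr{O}(e^{-2d_\star(L-a-4\eta)/h})$ is absorbed as above, while the remaining term equals $\langle(1-\Pi)(\mathscr{L}-\mu_\star)\psi_\star,(1-\Pi)\psi_\star\rangle$ by self-adjointness of the spectral projection, and Cauchy--Schwarz bounds it by $\|(\mathscr{L}-\mu_\star)\psi_\star\|\cdot\|(1-\Pi)\psi_\star\|$, matching $\mathscr{O}(e^{-2d_\star(L-a-4\eta)/h})$ since $d_\star(L-a-3\eta)\geq d_\star(L-a-4\eta)$.

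The main obstacle is really just bookkeeping: the various polynomial $h$-factors --- from the Agmon derivative estimate, from the weight constant $(1-\epsilon)$, and from the spectral-gap factor $h^{-2}$ --- must all be absorbed into the exponentials by a controlled reduction of the argument of $d_\star$. This is precisely the role of the parameters $3\eta$ and $4\eta$ appearing in the statement: the strict monotonicity of $d_\star$ provides exactly the room needed.
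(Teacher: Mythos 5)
Your argument is correct and follows the paper's proof essentially verbatim: for the first part, the gauge identity \eqref{eq.gaugechange} reduces everything to the commutator $[\mathscr{L}_\ell,\chi_\ell]\Psi_\ell$, supported where $|x-x_\ell|\geq L-a-2\eta$ and hence controlled by the exponential decay of $\Psi_\ell$ (with the $\eta$-room absorbing prefactors), and for the second part you use exactly the paper's spectral-gap argument $\|(I-\Pi)\psi_\star\|\leq(\Lambda-\mu_\star)^{-1}\|(\mathscr{L}-\mu_\star)\psi_\star\|$ with $\Lambda-\mu_\star\geq ch^2$ and the $h^{-2}$ absorbed by passing from $3\eta$ to $4\eta$, followed by Cauchy--Schwarz for the quadratic form. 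The only blemish is the identity $\|\psi_\ell\|^2=1-\|(1-\chi_\ell)\Psi_\ell\|^2$, which should read $1-\bigl\|\sqrt{1-\chi_\ell^2}\,\Psi_\ell\bigr\|^2$; since the correction is supported where $\chi_\ell\neq 1$, it is of the same exponentially small order and the conclusion is unaffected.
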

\begin{proof}
The proof follows the same lines as \cite[Lemma 4.3]{FMR}. Using \eqref{eq.gaugechange} which holds on the support of $\chi_\ell$ we have
\[ \| (\mathscr L - \mu_\ell) \psi_\ell \| = \| (\mathscr{L}_\ell - \mu_\ell) \chi_\ell \Psi_\ell \| = \| \big[ \mathscr L_\ell, \chi_\ell \big] \Psi_\ell \| = \mathscr{O}\left( e^{-d_\ell \left(L - a - 3\eta\right)/h} \right),\]
where in the last estimate we used the exponential decay of $\Psi_\ell$ (Proposition \ref{prop.WKB}) and the fact that the commutator is nonzero (and bounded) only where $\chi_\ell$ is non-constant. Taking into account that $\Psi_\ell$ is normalized, we get in a similar way the estimate on $\| \psi_\ell \|$, and the analogous inequalities hold for $\psi_r$.

The estimates on $g_\star - \psi_\star$ can be obtained in the following way: by definition of the spectral projection $\Pi$ we have
\[  \langle (\mathscr{L}-\mu_\star) (I-\Pi) \psi_\star, (I-\Pi) \psi_\star \rangle \geq (\Lambda - \mu_\star ) \| (I-\Pi) \psi_\star \|^2.\]
Moreover, $\mathscr{L}$ commutes with its spectral projection $\Pi$, thus
\[ \langle (\mathscr{L}-\mu_\star) (I-\Pi) \psi_\star, (I-\Pi) \psi_\star \rangle \leq \| (\mathscr{L}-\mu_\star) \psi_\star \| \| (I-\Pi) \psi_\star \|.\]
Combining the last two inequalities, we infer that
\[ \| (I-\Pi) \psi_\star \| \leq \big(\Lambda-\mu_\star \big)^{-1}\| (\mathscr{L}-\mu_\star) \psi_\star \|.\]
Since $\mu_\ell = \mu_r + o(h^2)$ and $\mu_{\ell,2} - \mu_\ell \geq c h^2$ for some $c>0$ by Theorem \ref{thm.single.magnetic.well}, the gap between $\mathscr{L}$ and $\mu_\star$ is of order $h^{-2}$ which means that
\[ \| (I-\Pi) \psi_\star \| = \mathscr{O}\left( h^{-2} e^{-d_\star \left(L - a - 3\eta\right)/h} \right). \]
Furthermore, we can absorb the power factor in the exponential one changing $3\eta$ to $4 \eta$. Finally, to get the quadratic form bound, it is sufficient to use the Cauchy-Schwarz inequality,
\[ \langle \mathscr{L} (I-\Pi) \psi_\star, (I-\Pi) \psi_\star \rangle \leq \| \mathscr{L} (I-\Pi) \psi_\star \| \| (I-\Pi) \psi_\star \| = \mathscr{O}\left( e^{-2 d_\star \left(L - a - 4\eta\right)/h} \right), \]
where the last estimate follows from $ \| (\mathscr{L}-\mu_\star) \psi_\star \| = \mathscr{O}\left( e^{-d_\star \left(L - a - 3\eta\right)/h} \right)$.
\end{proof}

In fact, the pair $(g_\ell, g_r)$ spans the spectral subspace associated with $\Pi$.

\begin{lemma}
Assume that $\mu_\ell - \mu_r = o(h^2)$. Then for all $h$ small enough, $(g_\ell,g_r)$ is a basis of the subspace ${\rm{Ran}}\, \Pi$.
\end{lemma}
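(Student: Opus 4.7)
The plan is to show first that $(g_\ell, g_r)$ is linearly independent in ${\rm Ran}\,\Pi$ — which gives $\dim {\rm Ran}\,\Pi \geq 2$ — and then to rule out $\dim {\rm Ran}\,\Pi \geq 3$ by a soft pigeonhole argument in $\mathbb{C}^2$.

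Since $g_\star = \Pi \psi_\star$ lies in ${\rm Ran}\,\Pi$ by construction, to prove linear independence I would estimate the Gram matrix $G$ of $(g_\ell, g_r)$. By Lemma \ref{lem.g-psi}, $\|g_\star - \psi_\star\|$ is exponentially small and $\|\psi_\star\| = 1 + \mathscr{O}(\mathrm{exp.\ small})$, so the diagonal entries of $G$ are $1+\mathscr{O}(\mathrm{exp.\ small})$. The off-diagonal entry differs from $\langle \psi_\ell,\psi_r\rangle$ by an exponentially small term, and $\langle \psi_\ell,\psi_r\rangle$ itself is exponentially small because on the overlap of the supports of $\chi_\ell$ and $\chi_r$ every point has distance $>a+\eta$ to both $x_\ell$ and $x_r$, so the Agmon bound from Proposition \ref{prop.WKB} forces both $\Psi_\ell$ and $\Psi_r$ to be exponentially small there. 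Hence $G$ is close to the $2\times 2$ identity and invertible for $h$ small.

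For the upper bound, I would argue by contradiction: assume $\dim {\rm Ran}\,\Pi \geq 3$ and pick three orthonormal eigenfunctions $\Phi_1,\Phi_2,\Phi_3 \in {\rm Ran}\,\Pi$ with eigenvalues $\lambda_j \leq \Lambda$. Using an IMS partition $\theta_\ell^2 + \theta_r^2 = 1$ as in the proof of Lemma \ref{lem.roughestimates} together with the gauge change \eqref{eq.gaugechange}, define
\[ v_j^\star = e^{-i\sigma_\star/h}\theta_\star \Phi_j, \quad \star \in \{\ell, r\}, \ j = 1,2,3.\]
Agmon decay of $\Phi_j$ gives $\|(\mathscr{L}_\star - \lambda_j) v_j^\star\| = \|[\mathscr{L},\theta_\star]\Phi_j\| = \mathscr{O}(\mathrm{exp.\ small})$. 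By Lemma \ref{lem.roughestimates} combined with the hypothesis $\mu_\ell - \mu_r = o(h^2)$, each $\lambda_j$ is $o(h^2)$-close to $\mu_{\star,1}$ for both $\star$, while $\mu_{\star,2} - \mu_{\star,1} \geq c h^2$ by Theorem \ref{thm.single.magnetic.well}. The spectral theorem for $\mathscr{L}_\star$ then yields $v_j^\star = \alpha_j^\star \Psi_\star + \mathscr{O}(\mathrm{exp.\ small})$ for some scalars $\alpha_j^\star \in \mathbb{C}$. Since $\theta_\ell^2 + \theta_r^2 = 1$, the orthonormality $\langle \Phi_j,\Phi_k\rangle = \delta_{jk}$ translates into
\[ \overline{\alpha_j^\ell}\alpha_k^\ell + \overline{\alpha_j^r}\alpha_k^r = \delta_{jk} + \mathscr{O}(\mathrm{exp.\ small}),\]
so the three nonzero vectors $(\alpha_j^\ell, \alpha_j^r) \in \mathbb{C}^2$ would have a Gram matrix close to the $3\times 3$ identity — impossible since $\dim \mathbb{C}^2 = 2$. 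This contradiction forces $\dim {\rm Ran}\,\Pi = 2$, and $(g_\ell, g_r)$ is then a basis.

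The main technical point is the reduction $v_j^\star = \alpha_j^\star \Psi_\star + \mathscr{O}(\mathrm{exp.\ small})$: one must divide the exponentially small quasimode error by the $\Omega(h^2)$ distance from $\lambda_j$ to $\sigma(\mathscr{L}_\star)\setminus\{\mu_{\star,1}\}$, so a polynomial loss of order $h^{-2}$ appears and has to be absorbed into the exponential, at the cost of slightly shrinking the Agmon rate.
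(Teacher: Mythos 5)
Your proof is correct, but the second half takes a genuinely different route from the paper. For linear independence you and the paper do essentially the same thing (Lemma~\ref{lem.g-psi} plus the exponential smallness of $\langle\psi_\ell,\psi_r\rangle$ on the overlap strip, where both factors decay by Proposition~\ref{prop.WKB}). For the spanning step, however, you prove $\dim{\rm{Ran}}\,\Pi\leq 2$ by the classical Helffer--Sj\"ostrand counting argument: localize three putative orthonormal eigenfunctions with $\theta_\ell^2+\theta_r^2=1$, gauge them to the single wells via \eqref{eq.gaugechange}, use Lemma~\ref{lem.roughestimates} together with $\mu_\ell-\mu_r=o(h^2)$ and the $ch^2$ single-well gap from Theorem~\ref{thm.single.magnetic.well} to project each piece onto $\Psi_\star$ up to $\mathscr{O}(h^\infty)$, and then observe that three vectors in $\mathbb{C}^2$ cannot have a Gram matrix close to $I_3$; you correctly flag the $h^{-2}$ loss in the spectral-theorem step. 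The paper instead never counts dimensions: it takes an arbitrary $\Psi\in{\rm{Ran}}\,\Pi$ orthogonal to $g_\ell,g_r$, and via IMS, the gauge change and the rank-one projections $\Pi_\star=|\Psi_\star\rangle\langle\Psi_\star|$ derives $\langle\mathscr{L}\Psi,\Psi\rangle\geq\big(\min(\mu_{\ell,2},\mu_{r,2})+\mathscr{O}(h^\infty)\big)\|\Psi\|^2$, which contradicts $\langle\mathscr{L}\Psi,\Psi\rangle\leq\Lambda\|\Psi\|^2$ unless $\Psi=0$. What each buys: the paper's argument works with arbitrary elements of ${\rm{Ran}}\,\Pi$ and so needs neither Lemma~\ref{lem.roughestimates} nor the fact that ${\rm{Ran}}\,\Pi$ is spanned by eigenfunctions; your argument is the more standard ``at most two states in the cluster'' count, but it tacitly assumes the spectrum below $\Lambda$ is discrete so that three orthonormal \emph{eigenfunctions} exist --- this is true here because $\min B_\star<b_1$ gives $\Lambda\sim h\min B<hb_1\leq\inf\mathrm{spec}_{\mathrm{ess}}(\mathscr{L})$, but you should state it. Also, the commutator and projection errors you label ``exponentially small'' are only guaranteed to be $\mathscr{O}(h^\infty)$ by the Agmon estimates invoked in the paper; that weaker bound is all your pigeonhole needs.
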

\begin{proof}
By Lemma \ref{lem.g-psi}, $(g_\ell, g_r)$ is a perturbation of $(\psi_\ell,\psi_r)$, and therefore $g_\ell$ and $g_r$ are linearly independent for $h$ small enough. To prove that they form a basis of ${\rm{Ran}} \, \Pi$, it suffices to check that the only $\Psi \in {\rm{Ran}}\, \Pi$ orthogonal to both $g_\ell$ and $g_r$ is zero; note that by definition of vectors $g_\star$ this orthogonality requirement implies
\begin{equation}\label{eq.orthogonal}
 \langle \Psi, \psi_\star \rangle = \langle \Pi \Psi, \psi_\star \rangle = \langle \Psi, g_\star \rangle = 0.
 \end{equation}
We use the same cutoff functions $\theta_\ell$ and $\theta_r$ as in the proof of Lemma \ref{lem.roughestimates}, that is, satisfying $\theta_\ell^2 + \theta_r^2 =1$ and such that $\theta_\star$ equals one in a neighborhood of the well support $D(x_\star, a)$. Using the IMS localization formula \cite[Prop.~4.2]{Raymond17} we that get
\[ \langle \mathscr{L} \Psi, \Psi \rangle = \langle \mathscr{L} \theta_\ell \Psi, \theta_\ell \Psi \rangle + \langle \mathscr{L} \theta_r \Psi, \theta_r \Psi \rangle - h^2 \| (|\nabla \theta_\ell |^2 + |\nabla \theta_r|^2 ) \Psi \|^2,\]
and by Agmon estimates, $\Psi$ is exponentially small on the support of $\nabla \theta_\star$, so that
\[ \langle \mathscr{L} \Psi, \Psi \rangle = \langle \mathscr{L} \theta_\ell \Psi, \theta_\ell \Psi \rangle + \langle \mathscr{L} \theta_r \Psi, \theta_r \Psi \rangle + \mathscr{O}(h^\infty)\| \Psi \|^2.\]
We insert the phase factor reflecting the gauge change and use \eqref{eq.gaugechange} to obtain
\[ \langle \mathscr{L} \Psi, \Psi \rangle = \langle \mathscr{L}_\ell e^{-i \frac{\sigma_\ell}{h}} \theta_\ell \Psi, e^{-i\frac{\sigma_\ell}{h}}\theta_\ell \Psi \rangle + \langle \mathscr{L}_r e^{-i \frac{\sigma_r}{h}} \theta_r \Psi, e^{-i \frac{\sigma_r}{h}}\theta_r \Psi \rangle + \mathscr{O}(h^\infty)\| \Psi \|^2.\]
Using next the spectral projections $\Pi_\star = |\Psi_\star \rangle \langle \Psi_\star |$ we find
\begin{multline*} \langle \mathscr{L} \Psi, \Psi \rangle \geq \mu_\ell | \langle e^{-i \frac{\sigma_\ell}{h}} \theta_\ell \Psi, \Psi_\ell \rangle |^2 + \mu_r| \langle e^{-i \frac{\sigma_r}{h}} \theta_r \Psi, \Psi_r \rangle |^2 \\ + \mu_{\ell,2} \|(I-\Pi_\ell) e^{-i \frac{\sigma_\ell}{h}} \theta_\ell \Psi \|^2 + \mu_{r,2} \| (I-\Pi_r)e^{-i \frac{\sigma_r}{h}} \theta_r \Psi \|^2+ \mathscr{O}(h^\infty)\| \Psi \|^2.
\end{multline*}
Note also that using the decay of $\psi_\star$, one is able to replace $\theta_\ell$ by the the cut-off function $\chi_\ell$ of \eqref{def.psil} up to exponentially small errors, so that
\[  \langle e^{-i \frac{\sigma_\ell}{h}} \theta_\ell \Psi, \Psi_\ell \rangle = \langle \Psi, e^{i\frac{\sigma_\ell}{h}} \chi_\ell \Psi_\ell \rangle + \mathscr{O} (h^\infty) \| \Psi \| = \mathscr{O}(h^\infty),\]
where in the last equality we have used the orthogonality assumption \eqref{eq.orthogonal}. In particular, $\| \Pi_\ell  e^{-i \frac{\sigma_\ell}{h}} \theta_\ell \Psi \| = \mathscr{O}(h^\infty) \| \Psi \|$, and we have
\[ \langle \mathscr{L} \Psi, \Psi \rangle \geq \big(\min( \mu_{\ell,2}, \mu_{r,2} ) + \mathscr{O}(h^\infty) \big) \| \Psi \|^2.\]
However, vector $\Psi$ is by assumption in the range of $\Pi$, meaning that
\[ \langle \mathscr{L} \Psi, \Psi \rangle \leq \min \big( \frac{\mu_{\ell}+\mu_{\ell,2}}{2}, \frac{\mu_r + \mu_{r,2}}{2} \big) \| \Psi \|^2.\]
Since $\mu_\ell - \mu_r = o(h^2)$ by assumption and $\mu_{\star,2}-\mu_{\star} \geq c h^2$ by Theorem~\ref{thm.single.magnetic.well}, the last two inequalities are incompatible unless $\Psi=0$. Hence ${\rm{Ran}} \, \Pi \cap \lbrace g_\ell, g_r \rbrace ^\perp = \lbrace 0 \rbrace$, which means that $(g_\ell,g_r)$ is a basis of the subspace ${\rm{Ran}}\, \Pi$.
\end{proof}

\subsection{Interaction matrix}

\begin{proposition}
Assume that $\mu_\ell - \mu_r = o(h^2)$. Then there exists an orthonormal basis $(e_\ell,e_r)$ of ${\rm{Ran}} \, \Pi$ such that
\[ e_\ell = \Psi_\ell + \mathscr{O}\big(h^\infty + |\langle \psi_r, \psi_\ell \rangle |\big), \qquad e_r = \Psi_r + \mathscr{O}\big(h^\infty + |\langle \psi_r, \psi_\ell \rangle |\big),\]
and the restriction of $\mathscr{L}$ to ${\rm{Ran}} \, \Pi$ is in this basis represented by the matrix
\[ M =
\begin{pmatrix}
\mu_\ell & w \\
\overline{w} & \mu_r
\end{pmatrix}
+ \mathscr{O}\big(|\langle \psi_\ell, \psi_r \rangle |^2 + |w|^2 +   e^{- 2 d_\ell(L-a-4\eta)/h} + e^{-2d_r(L-a-4\eta)/h}\big) \]
with $w:= \langle \big( \mathscr{L} - \frac{\mu_\ell + \mu_r}{2} \big) \psi_\ell, \psi_r \rangle$.
\end{proposition}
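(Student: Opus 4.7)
The starting point is the basis $(g_\ell, g_r)$ of $\mathrm{Ran}\,\Pi$ provided by the previous lemma, with $g_\star = \Pi\psi_\star$. The plan is to form the Gram matrix $G$ of this basis, orthonormalize it via the \emph{symmetric} prescription $(e_\ell,e_r):=(g_\ell,g_r)\,G^{-1/2}$, and compute the entries $\mathcal L_{\star\sigma}:=\langle\mathscr{L}g_\star,g_\sigma\rangle$; the matrix of $\mathscr{L}|_{\mathrm{Ran}\,\Pi}$ in the orthonormal basis $(e_\ell,e_r)$ is then $M=G^{-1/2}\mathcal L\, G^{-1/2}$, and it remains to expand this product to the required order.

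For the Gram matrix, combining $g_\star=\Pi\psi_\star$ with Lemma \ref{lem.g-psi} yields $\langle g_\star,g_\star\rangle = \|\psi_\star\|^2 - \|(I-\Pi)\psi_\star\|^2 = 1 + \mathscr{O}\bigl(e^{-2d_\star(L-a-4\eta)/h}\bigr)$ and $\langle g_\ell,g_r\rangle = \langle \Pi\psi_\ell,\psi_r\rangle = \langle\psi_\ell,\psi_r\rangle + \mathscr{O}(\text{exp.\ small})$. Setting $\Omega:=\langle\psi_\ell,\psi_r\rangle$ and $G=I+E$, one has $E_{\ell r}=\Omega+\mathscr{O}(\text{exp.\ small})$ while the diagonal of $E$ is exponentially small, so that $G^{-1/2}=I-\tfrac12 E+\mathscr{O}(\|E\|^2)$ is a valid expansion.

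To compute $\mathcal L$, the key observation is that $g_\sigma-\psi_\sigma=(\Pi-I)\psi_\sigma$ lies in $(\mathrm{Ran}\,\Pi)^\perp$, while $\mathscr{L}g_\star\in\mathrm{Ran}\,\Pi$ since $[\mathscr{L},\Pi]=0$; hence $\langle\mathscr{L}g_\star,g_\sigma\rangle=\langle\mathscr{L}g_\star,\psi_\sigma\rangle$. Splitting $g_\star=\psi_\star+(g_\star-\psi_\star)$, transferring $\mathscr{L}$ onto $\psi_\sigma$ by self-adjointness, and using both bounds of Lemma \ref{lem.g-psi} (the operator-norm bound on $(\mathscr{L}-\mu_\star)\psi_\star$ and the vector-norm bound on $g_\star-\psi_\star$), one obtains $\mathcal L_{\star\star}=\mu_\star+\mathscr{O}(e^{-2d_\star(L-a-4\eta)/h})$ and $\mathcal L_{\ell r}=\langle\mathscr{L}\psi_\ell,\psi_r\rangle+\mathscr{O}(\text{exp.\ small})$.

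Assembling $M=G^{-1/2}\mathcal L\, G^{-1/2}$ and expanding, the diagonal absorbs only a $\mathscr{O}(|\Omega|^2)$ correction, while for the off-diagonal
\[
M_{\ell r}=\mathcal L_{\ell r}-\tfrac12(\mathcal L_{\ell\ell}+\mathcal L_{rr})\,\Omega + \mathscr{O}(|\Omega|^2) = w + \mathscr{O}\bigl(|\Omega|^2+|w|^2+e^{-2d_\ell(L-a-4\eta)/h}+e^{-2d_r(L-a-4\eta)/h}\bigr),
\]
since $\mathcal L_{\ell r}-\tfrac{\mu_\ell+\mu_r}{2}\Omega$ equals $\langle(\mathscr{L}-\tfrac{\mu_\ell+\mu_r}{2})\psi_\ell,\psi_r\rangle=w$ up to exponentially small errors. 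This is precisely why the \emph{symmetric} normalization is used on both sides: it automatically produces the symmetrized tunneling coefficient $w$, whereas Gram--Schmidt would leave an asymmetric remainder. The approximation $e_\star=\Psi_\star+\mathscr{O}(h^\infty+|\Omega|)$ then follows from $(e_\ell,e_r)=(g_\ell,g_r)(I-\tfrac12 E+\ldots)$ together with the fact that $\psi_\star$ coincides with $\Psi_\star$ up to the unitary gauge factor $e^{i\sigma_\star/h}$ and the cutoff $\chi_\star$, which modifies $\Psi_\star$ only where the latter is $\mathscr{O}(h^\infty)$ by Proposition \ref{prop.WKB}. The main obstacle is the bookkeeping of the quadratic remainders --- in particular, one has to use the identity $w=\tfrac{\mu_\ell-\mu_r}{2}\Omega+\langle(\mathscr{L}-\mu_\ell)\psi_\ell,\psi_r\rangle$ together with the hypothesis $\mu_\ell-\mu_r=o(h^2)$ in order to verify that the leftover cross-terms produced by $G^{-1/2}$ are indeed absorbed into the stated error $\mathscr{O}(|\Omega|^2+|w|^2+e^{-2d_\ell(\cdot)/h}+e^{-2d_r(\cdot)/h})$.
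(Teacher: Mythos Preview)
Your proof is correct and follows essentially the same route as the paper: build the Gram matrix $G$ of $(g_\ell,g_r)$, orthonormalize symmetrically via $G^{-1/2}$, replace $g_\star$ by $\psi_\star$ in the matrix of $\mathscr{L}$ using Lemma~\ref{lem.g-psi}, and expand $M=G^{-1/2}LG^{-1/2}$ with $G^{-1/2}=I-\tfrac12 T+\mathscr{O}(T^2)$. Your orthogonality observation $\langle \mathscr{L}g_\star, g_\sigma-\psi_\sigma\rangle=0$ is a mild sharpening of the paper's bookkeeping, and your explicit discussion of why the symmetric square root produces the symmetrized coefficient $w$ is a helpful addition, but the argument is otherwise the same.
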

\begin{proof}
The action of $\mathscr{L}$ in the subspace ${\rm{Ran}} \, \Pi$ spanned by $(g_\ell,g_r)$ is expressed by the matrix
\[ L = \begin{pmatrix}
\langle \mathscr{L} g_\ell, g_\ell \rangle & \langle \mathscr{L} g_\ell, g_r \rangle \\
\langle \mathscr{L} g_r, g_\ell \rangle  & \langle \mathscr{L} g_r, g_r \rangle
\end{pmatrix}.
\]
By virtue of Lemma \ref{lem.g-psi}, we can replace in this expresssion $g_\star$ by $\psi_\star$,
\[
L= \begin{pmatrix}
\langle \mathscr{L} \psi_\ell, \psi_\ell \rangle & \langle \mathscr{L} \psi_\ell, \psi_r \rangle \\
\langle \mathscr{L} \psi_r, \psi_{\ell} \rangle & \langle \mathscr{L} \psi_r, \psi_r \rangle
\end{pmatrix}
+ \mathscr{O} \big( \mathcal E \big)
\]
with the error of order $\mathcal E =  e^{- 2 d_\ell(L-a-4\eta)/h} + e^{-2d_r(L-a-4\eta)/h}$. By the same lemma we can replace $\langle \mathscr{L} \psi_\star, \psi_\star \rangle$ by $\mu_\star$ with the same error, thus obtaining
\[ L = \begin{pmatrix}
\mu_\ell & w \\
\overline{w} & \mu_r
\end{pmatrix} + \frac{\mu_\ell + \mu_r}{2}T + \mathscr{O}\big( \mathcal E \big) \]
with $w= \langle (\mathscr{L} - \frac{\mu_\ell + \mu_r}{2}) \psi_\ell, \psi_r \rangle$ and
\[ T= \begin{pmatrix}
0 & \langle \psi_\ell, \psi_r \rangle \\
\langle \psi_r, \psi_\ell \rangle & 0
\end{pmatrix}.\]
However, the basis $(g_\ell,g_r)$ is not orthonormal. For this reason we consider the Gram matrix
\[ G = \begin{pmatrix}
\langle g_\ell, g_\ell \rangle & \langle g_\ell, g_r \rangle \\
\langle g_r, g_\ell \rangle & \langle g_r, g_r \rangle
\end{pmatrix},
\]
the leading term of which can be by Lemma \ref{lem.g-psi} expressed as
\[
G = I + T + \mathscr{O}(\mathcal E).
\]
Then the basis ${e_\ell \choose e_r} = G^{-\frac 12} {g_\ell \choose g_r}$ is orthonormal, and the action of $\mathscr{L}$ in it is expressed by the matrix
\[ M = G^{-\frac 12} L G^{- \frac 12} =
\begin{pmatrix}
\mu_\ell & w \\ \overline{w} & \mu_r
\end{pmatrix}
+\mathscr{O}(|T|^2 + \mathcal{E} + |w|^2),
 \]
where we used Taylor expansion of the square root, $G^{-\frac 12}= I- \frac 12 T + \mathscr{O}(T^2)$.
\end{proof}

 In order to prove Theorem \ref{thm.doublewell}, it remains to estimate $w$ and the error terms $\langle \psi_\ell, \psi_r \rangle$ and $e^{-2d_\star(L-a-4\eta)/h}$. This is the purpose of the next section.

\subsection{Estimation of the interaction coefficient and the error terms}

The sought estimates on $w$, $\langle \psi_\ell, \psi_r \rangle$ and of the error term $e^{-2d_\star(L-a-4\eta)/h}$ are given in Lemmata \ref{lem.w}, \ref{lem.scalarproduct} and \ref{lem.error}, respectively. They are technical, and the two first ones are based on the Laplace method. We again follow the estimate ideas used in \cite{FMR}.
\begin{lemma}\label{lem.w}
There are a nonzero $c_0$ and $\nu \in \mathbb R$ such that
\[ w = c_0 h^\nu e^{-S/h} \big(1+o(1)\big) + \mathscr{O} \big( (\mu_\ell - \mu_r) |\langle \psi_\ell, \psi_r \rangle | \big) \]
holds as $h \to 0$, where $S$ is given by \eqref{defS}.
\end{lemma}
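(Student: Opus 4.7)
The plan is to compute $w=\langle(\mathscr{L}-E)\psi_\ell,\psi_r\rangle$ with $E=(\mu_\ell+\mu_r)/2$ by a commutator identity followed by a complex-phase Laplace analysis, adapting the strategy of \cite{FMR}.

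Since \eqref{eq.gaugechange} holds on $\mathrm{supp}\,\chi_\ell$ and $\mathscr{L}_\ell\Psi_\ell=\mu_\ell\Psi_\ell$, one has
\[
(\mathscr{L}-E)\psi_\ell = e^{i\sigma_\ell/h}[\mathscr{L}_\ell,\chi_\ell]\Psi_\ell + \tfrac{\mu_\ell-\mu_r}{2}\psi_\ell,
\]
so pairing with $\psi_r$ gives $w = \langle e^{i\sigma_\ell/h}[\mathscr{L}_\ell,\chi_\ell]\Psi_\ell,\psi_r\rangle + \tfrac{\mu_\ell-\mu_r}{2}\langle\psi_\ell,\psi_r\rangle$, whose second summand is exactly the declared error. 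A direct computation gives $[\mathscr{L}_\ell,\chi_\ell]=2ih\chi_\ell' P_{\ell,1}-h^2\chi_\ell''$, a first-order operator supported on the thin transition strip $\Sigma$ around $x_1\approx L/2 - a - \tfrac{3}{2}\eta$. By the separation $L>(2+\sqrt 6)a$ and $\eta$ small, $\chi_r\equiv 1$ on $\Sigma$ so $\psi_r=e^{i\sigma_r/h}\Psi_r$ there, and $|x-x_\star|\geq a$, placing us in the constant-field region where $d_\star'(v)=C_\star/v + b_1 v/2$ and $A_\star$ admits the explicit circular form $A_\star = (C_\star/|x-x_\star|^2)(x-x_\star)^\perp + (b_1/2)(x-x_\star)^\perp$.

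I would then substitute the WKB ansatz $\Psi_\star\sim h^{-1/2}a_{h,\star}(x)e^{-d_\star(|x-x_\star|)/h}$ from Proposition~\ref{prop.WKB}, which reshapes the main term into an oscillatory Laplace integral
\[
h^{\alpha}\int_\Sigma b(x)\exp\!\bigl(-[\varphi(x) - i(\sigma_\ell - \sigma_r)(x)]/h\bigr)\,\mathrm{d}x,
\]
with $\varphi(x)=d_\ell(|x-x_\ell|)+d_r(|x-x_r|)$ the combined Agmon action, $b$ a smooth non-vanishing amplitude built from $\chi_\ell'$, the WKB profiles, and $P_{\ell,1}$, and $\alpha$ a half-integer coming from WKB normalizations. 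Since $\nabla(\sigma_\ell-\sigma_r)=A_r-A_\ell$ does not vanish on the real strip, the stationary point of the holomorphic extension of $f:=\varphi-i(\sigma_\ell-\sigma_r)$ is genuinely complex. Using the explicit background forms of $A_\star$ and $d_\star'$, one solves the critical-point equation in closed form; its $x_1$-coordinate produces the square-root $\sqrt{L^2/4-M/b_1}$ appearing in \eqref{defS}. Evaluating $\mathrm{Re}\,f$ at the saddle and simplifying with the explicit radial forms of $d_\star$ yields $\mathrm{Re}\,f=S=d_\ell(L/2)+d_r(L/2)+I$ with $I$ as in the statement, while the non-degenerate Gaussian contribution gives a non-vanishing constant $c_0$ and a power of $h$ absorbed into $h^\nu$.

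The principal obstacle is the complex steepest-descent analysis: one must justify the contour deformation through the complex saddle and verify non-degeneracy of the complex Hessian so that $c_0\ne 0$. In the mirror-symmetric setting of \cite{FMR} this is carried out explicitly; since our asymmetry is only of order $o(h^2)$ it perturbs the subleading Gaussian prefactor, leaving the exponent $S$ unchanged, so the \cite{FMR} argument transfers with minor modifications.
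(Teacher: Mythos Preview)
Your approach shares the paper's overall architecture---commutator reduction followed by a complex Laplace analysis---but there are two substantive gaps.

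First, the paper does \emph{not} analyze the integral directly on the strip $\Sigma=\mathrm{supp}\,\chi_\ell'$. Instead it integrates by parts in $x_1$ to convert the commutator integral into a \emph{one-dimensional} boundary integral over the line $\{x_1=0\}$, plus a bulk remainder which, after invoking the eigenvalue equations $\mathscr{L}_\star\Psi_\star=\mu_\star\Psi_\star$, collapses into another $(\mu_\ell-\mu_r)\langle\psi_\ell,\psi_r\rangle$ error. Your two-dimensional steepest descent on $\Sigma$ is problematic: the amplitude contains $\chi_\ell'$, so $x_1$ is pinned to a strip of width $\eta$ and cannot be deformed; there is no reason for the complex saddle of $f$ to lie there, and the leading asymptotics would a priori depend on the cutoff. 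The integration-by-parts step (equivalently, a flux/current argument) is precisely what removes this dependence and places the analysis at the geodesic midline. Relatedly, the square root $\sqrt{L^2/4-M/b_1}$ in \eqref{defS} does \emph{not} arise as the $x_1$-coordinate of a saddle; in the paper it is the purely imaginary shift $x_2\mapsto x_2+iq$ with $q=-\sqrt{L^2/4-(M_\ell+M_r)/b_1}$, applied to the line integral at $x_1=0$.

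Second, you substitute the WKB expansion of Proposition~\ref{prop.WKB} and then deform into the complex $x_2$-plane. But that proposition only furnishes smooth approximants on $\mathbb{R}^2$; it says nothing about holomorphic extension, and the $\mathscr{O}(h^{p+1})$ remainder has no reason to stay small off the real axis. The paper instead uses the exact Kummer-function representation of $\Psi_\star$ in the constant-field region (Lemma~\ref{lem.psi.formule}) to obtain a genuine holomorphic extension in $x_2$ together with uniform asymptotics there (Lemma~\ref{lem.psi.asymp}); this is what licenses the contour shift. Without it, the complex deformation step is unjustified.
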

\begin{proof} The proof is essentially the same as in \cite[Section 5]{FMR}, thus we recall here only the main steps. In $w$ we replace $\mu_r$ by $\mu_\ell$ and we find
\[w = \tilde w + \mathscr{O}((\mu_\ell - \mu_r) \langle \psi_\ell, \psi_r \rangle), \;\; \text{where}\;\; \tilde w := \langle (\mathscr{L}-\mu_\ell) \psi_\ell, \psi_r \rangle. \]
Recalling the definition \eqref{def.psil} of $\psi_\ell$ and using \eqref{eq.gaugechange} we get
\[ \tilde{w} = \langle [ \mathscr{L}, \chi_\ell] e^{i \frac{\sigma_\ell}{h}} \Psi_\ell, e^{i \frac{\sigma_r}{h}} \Psi_r \rangle .\]
Since $\mathscr{L}=P^2$ with $P=ih\nabla + A$, we can rewrite the commutator expression as
\begin{align*}
 \tilde{w} &= \langle (P \cdot [P,\chi_\ell] + [P,\chi_\ell] \cdot P) e^{i \frac{\sigma_\ell}{h}}\Psi_\ell, e^{i \frac{\sigma_r}{h}} \Psi_r \rangle \\
 &= \langle [P_1, \chi_\ell] e^{i \frac{\sigma_\ell}{h}}\Psi_\ell, P_1 e^{i \frac{\sigma_r}{h}} \Psi_r \rangle + \langle [P_1,\chi_\ell] P_1 e^{i \frac{\sigma_\ell}{h}}\Psi_\ell, e^{i \frac{\sigma_r}{h}} \Psi_r \rangle\\
 &= ih \int_{x_1>0} \chi_\ell'(x_1) e^{\frac ih (\sigma_\ell - \sigma_r) }\big( \Psi_\ell \overline{(ih\partial_1 + A_{r,1}) \Psi_r}  + ((ih\partial_1 + A_{\ell,1}) \Psi_\ell) \overline{\Psi_r}\big) \dd x.
\end{align*}
where we have used the fact that $\chi_\ell$ is independent of $x_2$ and  $\chi_\ell'$ is supported in the half-plane $x_1>0$, together with $e^{-i \frac{\sigma_\star}{h}} P e^{i \frac{\sigma_\star}{h}} = P_\star$. Integrating then the obtained expression by parts, we get
\begin{align*}
\tilde{w} &= - ih\int_{\lbrace x:\,x_1 = 0 \rbrace } e^{\frac{i}{h}(\sigma_\ell - \sigma_r)} \big( \Psi_\ell \overline{(ih\partial_1 + A_{r,1}) \Psi_r}  + ((ih\partial_1 + A_{\ell,1}) \Psi_\ell) \overline{\Psi_r}\big) \dd x_2\\
&\quad + \int_{x_1>0} \chi_\ell(x_1) e^{\frac{i}{h}(\sigma_\ell - \sigma_r)}  \big( \Psi_\ell \overline{(ih\partial_1 + A_{r,1})^2 \Psi_r}  - ((ih\partial_1 + A_{\ell,1})^2 \Psi_\ell) \overline{\Psi_r}\big) \dd x\\
&= -ih \int_{\{x:\,x_1=0\}} e^{\frac{i}{h}(\sigma_\ell - \sigma_r)}  \big( \Psi_\ell \overline{(ih\partial_1 + A_{r,1}) \Psi_r}  + ((ih\partial_1 + A_{\ell,1}) \Psi_\ell)  \overline{\Psi_r}\big) \dd x_2 \\
&\quad + \int_{x_1>0} \chi_\ell(x_1) e^{\frac{i}{h}(\sigma_\ell - \sigma_r)}  \big( \Psi_\ell \overline{\mathscr{L}_r \Psi_r}  -\mathscr{L}_\ell \Psi_\ell \overline{\Psi_r}\big) \dd x,
\end{align*}
where the last equality is justified using integration by parts with respect to $x_2$. Using further the eigenvalue equation for $\Psi_\ell$ and $\Psi_r$, we infer  that
\begin{multline}
w= -ih \int_{\{x:\,x_1=0\}} e^{\frac{i}{h}(\sigma_\ell - \sigma_r)}  \big( \Psi_\ell \overline{(ih\partial_1 + A_{r,1}) \Psi_r}  + ((ih\partial_1 + A_{\ell,1}) \Psi_\ell) \overline{\Psi_r}\big) \dd x_2 \\
+ \mathscr{O}((\mu_r - \mu_\ell) \langle \psi_\ell, \psi_r \rangle).
\end{multline}
To estimate the above integral, we employ the Laplace method. In view of the complex phase involved, we have first to extend the function to complex values of the argument, namely
 \[ x_2 \mapsto x_2 + iq, \;\; {\rm where} \;\; q = - \sqrt{ \frac{L^2}{4} - \frac{M_\ell + M_r}{b_1}}\]
with $M_\star := \frac{1}{2\pi} \int_{\R^2} \big( b_1 - B_\star \big) \dd x.$ This can be done in view of the analyticity of $\Psi_\ell$ and $\Psi_r$ (see Lemma \ref{lem.psi.asymp} below and \cite{FMR} for further details). We obtain
\begin{align*}
\mathcal I &:=  -ih \int_{\{x:\,x_1=0\}} e^{\frac{i}{h}(\sigma_\ell - \sigma_r)}  \big( \Psi_\ell \overline{(ih\partial_1 + A_{r,1}) \Psi_r}  + ((ih\partial_1 + A_{\ell,1}) \Psi_\ell) \overline{\Psi_r}\big) \dd x_2 \\
&= -ih \int_{\mathbb R} e^{\frac{i}{h}(\sigma_\ell - \sigma_r)(0,x_2+iq)}  \big( \Psi_\ell \overline{(ih\partial_1 + A_{r,1}) \Psi_r}  + ((ih\partial_1 + A_{\ell,1}) \Psi_\ell) \overline{\Psi_r}\big)(0,x_2+iq) \dd x_2.
\end{align*}
Next we use Lemma \ref{lem.psi.asymp} to replace $\Psi_\ell$ and $\Psi_r$ by the leading terms of their asymptotic expansions with the phase given by  \eqref{eq.sigma.diff} and \eqref{eq.phase.g}. This gives
\begin{align*}
\mathcal I = c_\ell c_r\, e^{- \frac{d_\ell(L/2) + d_r(L/2)}{h}} \int_{\mathbb R} e^{- \frac{g(x_2 + iq)}{h}} \omega(x_2 + i q) \,\dd x_2 \,\big( 1+ o(1) \big),
\end{align*}
with $g(y) = \frac{b_1}{2} y^2 + i \frac{b_1 L}{2} y - (M_\ell + M_r) \ln \big( 1 + i \frac{2 y}{L} \big) $ and
\begin{align*}
 \omega(y) &= \big( y+i \textstyle{\frac L2} \big) \left( \frac{M_\ell + M_r}{ \frac{L^2}{4}+y^2} - b_1 \right) \left( \frac{\frac{b_1L^2}{8} - M_\ell}{\frac{b_1 L^2}{8} -M_\ell + \frac{b_1 y^2}{2}} \right)^{\delta_\ell} \left( \frac{\frac{b_1L^2}{8} - M_r}{\frac{b_1L^2}{8} -M_r + \frac{b_1 y^2}{2}} \right)^{\delta_r} \\
 &= g'(y)  \left( \frac{\frac{b_1L^2}{8} - M_\ell}{\frac{b_1 L^2}{8} -M_\ell + \frac{b_1 y^2}{2}} \right)^{\delta_\ell} \left( \frac{\frac{b_1L^2}{8} - M_r}{\frac{b_1L^2}{8} -M_r + \frac{b_1 y^2}{2}} \right)^{\delta_r}
 \end{align*}
with some $\delta_\star>0$. The function $x_2 \mapsto {\rm Re}\, g(x_2+iq)$ has a unique minimum at $x_2=0$, and it is non-degenerate. For this reason, we are allowed to use the Laplace method by which there are $c_0 \neq 0$ and $\nu \in \mathbb R$ such that
\[ \mathcal I = c_0\, h^\nu e^{- \frac{d_\ell(L/2) + d_r(L/2)+ g(iq)}{h}} (1+ o(1))\]
holds as $h\to 0$, and since
\begin{multline*}
g(iq) =  \frac{M_\ell + M_r}{2} - \frac{b_1L^2}{8} + \frac{b_1L}{2} \sqrt{\frac{L^2}{4}-\frac{M_\ell + M_r}{b_1}} \\ - (M_\ell + M_r) \ln \left( 1 + \sqrt{ 1 - \frac{4(M_\ell + M_r)}{b_1 L^2}}\right),
\end{multline*}
this concludes the proof.
\end{proof}
\begin{lemma}\label{lem.scalarproduct}
$\langle \psi_\ell, \psi_r \rangle = \mathscr{O}(h^{-1}|w|)$ holds as $h \to 0$.
\end{lemma}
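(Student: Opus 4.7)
The plan is to evaluate $\langle\psi_\ell,\psi_r\rangle$ asymptotically by the Laplace method, in direct parallel with the analysis of $w$ carried out in Lemma \ref{lem.w}, and then to compare the two expansions. Using the definition \eqref{def.psil}, I would write
\[
\langle\psi_\ell,\psi_r\rangle = \int_{\R^2} e^{i(\sigma_\ell-\sigma_r)/h}\chi_\ell(x_1)\chi_r(x_1)\Psi_\ell(x)\overline{\Psi_r(x)}\,\dd x,
\]
and substitute the WKB forms $\Psi_\star \approx h^{-1/2}a_{h,\star}e^{-d_\star(|x-x_\star|)/h}$ from Proposition \ref{prop.WKB}, with the exponentially weighted remainder controlled by \eqref{eq.Agmon}. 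The resulting integrand is $h^{-1}\chi_\ell\chi_r\, a_{h,\ell}\overline{a_{h,r}}\,e^{-\Phi/h}$ with complex phase $\Phi(x) = d_\ell(|x-x_\ell|) + d_r(|x-x_r|) - i(\sigma_\ell-\sigma_r)(x)$.

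Next, I would apply the same contour deformation $x_2 \mapsto x_2 + iq$ with $q = -\sqrt{L^2/4 - (M_\ell+M_r)/b_1}$ used in Lemma \ref{lem.w}. Analyticity of the WKB data in $x_2$ (Lemma \ref{lem.psi.asymp}, as in \cite{FMR}) makes this legitimate. After the shift, $\Re\,\Phi$ attains a unique non-degenerate minimum at $(0,0)$ whose value is exactly $S$ given by \eqref{defS}, matching the minimum that appears in the treatment of $w$. The standard 2D Laplace method then yields an asymptotic expansion
\[
\langle\psi_\ell,\psi_r\rangle = c_1 h^{\nu_1} e^{-S/h}(1+o(1))
\]
for some $c_1\in\C$ and $\nu_1\in\R$.

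Finally, I would compare with Lemma \ref{lem.w}, which gives $|w| = |c_0|h^{\nu}e^{-S/h}(1+o(1))$ with $c_0\ne 0$ (the extra correction $\mathscr{O}((\mu_\ell-\mu_r)|\langle\psi_\ell,\psi_r\rangle|)$ is absorbed since $\mu_\ell - \mu_r = o(h^2)$). This yields $\langle\psi_\ell,\psi_r\rangle = \mathscr{O}(h^{\nu_1-\nu}|w|)$. A direct power count of the Laplace contributions (the 2D integration produces a factor $h$ as opposed to the $h^{1/2}$ from the 1D boundary integration for $w$, while $w$ carries an additional factor $h$ from the integration by parts) shows $\nu_1 - \nu \geq -1$, giving the stated bound $\mathscr{O}(h^{-1}|w|)$ with room to spare.

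The main technical obstacle is verifying that the contour deformation is legitimate, i.e., that the WKB amplitudes and phase extend analytically in $x_2$ on a strip wide enough to accommodate the shift by $iq$; this is precisely the content of Lemma \ref{lem.psi.asymp} invoked in the proof of Lemma \ref{lem.w}. Once that analyticity is in hand, the estimate reduces to a routine 2D Laplace computation that mirrors the 1D boundary analysis in the preceding lemma.
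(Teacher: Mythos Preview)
Your approach is the right one and matches the paper's: write the scalar product as a two--dimensional oscillatory integral, deform the $x_2$--contour into the complex plane using the analyticity of $\Psi_\star$ supplied by Lemma~\ref{lem.psi.asymp}, and apply Laplace's method. There are, however, two points where your sketch diverges from what actually works.

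First, the constant shift $x_2\mapsto x_2+iq$ with $q=-\sqrt{L^2/4-(M_\ell+M_r)/b_1}$ is not admissible over the whole $x_1$--support of $\chi_\ell\chi_r$. The analyticity strip $\Omega(x_1)$ from Lemma~\ref{lem.psi.asymp} shrinks as $|x_1|$ approaches $\tfrac L2-a$, and for $x_1$ near those endpoints the fixed shift $|q|$ exceeds the allowed imaginary width. The paper resolves this by using an $x_1$--dependent shift
\[
q(x_1)=-\sqrt{\Big(\tfrac L2-|x_1|\Big)^2-\tfrac{M_\ell+M_r}{b_1}},
\]
which stays inside the domain of holomorphy for every $x_1$ in the support and, crucially, is designed so that the real part of the resulting phase $f(x_1,x_2+iq(x_1))$ has its minimum in $x_2$ at $x_2=0$ for each fixed $x_1$. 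One then shows that the minimum over $x_1$ occurs at $x_1=0$, recovering the value $S$.

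Second, your final power count is the weakest link, whereas the paper gets the factor $h^{-1}$ without any bookkeeping. After the Laplace step in $x_1$ (contributing $h^{1/2}$), the remaining $x_2$--integral is literally the same integral that appears in $\mathcal I$ from Lemma~\ref{lem.w}: one checks $f(0,y)=g(y)$ and the amplitude $\tilde\omega(0,y)$ equals $\omega(y)/g'(y)$. Since $|g'|$ is bounded below along the shifted contour away from the critical point, this gives $|\langle\psi_\ell,\psi_r\rangle|\le Ch^{-1}|w|$ directly, with no need to track the separate exponents $\nu,\nu_1$ or to worry about the vanishing of $\omega$ at the saddle.
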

\begin{proof}
By definition of $\psi_\ell$ and $\psi_r$ we have
\[ \langle \psi_\ell, \psi_r \rangle = \int_{\R^2} \chi_\ell \chi_r\, e^{\frac{i}{h}(\sigma_\ell - \sigma_r)} \Psi_\ell \overline{\Psi}_r \,\dd x.\]
This integral can be estimated in a way similar to that of Lemma \ref{lem.w}. We recall that $\chi_\ell$ and $\chi_r$ depend on $x_1$ only. On the support of $\chi_\ell \chi_r$, the functions $\Psi_\ell$ and $\Psi_r$ are analytic with respect to $x_2$. This makes it possible to extend them to complex values of the argument, $x_2 \mapsto x_2 +iq(x_1)$, where
\[ q(x_1) = - \sqrt{\left( \frac L2 - |x_1| \right)^2 - \frac{M_\ell + M_r}{b_1}} \]
so that $x_2$ stays in the analyticity domain of $\Psi_\ell$ and $\Psi_r$ and we get
\[ \langle \psi_\ell , \psi_r \rangle = \int_{\R^2} \left( \chi_\ell \chi_r\, e^{\frac{i}{h}(\sigma_\ell - \sigma_r)} \Psi_\ell \overline{\Psi}_r \right) (x_1, x_2+iq(x_1)) \,\dd x.\]
Now we use the asymptotic formula from Lemma \ref{lem.psi.asymp} which yields
\[ \langle \psi_\ell, \psi_r \rangle = c_\ell c_r h^{-1} e^{- \frac{d_\ell(L/2) + d_r(L/2)}{h} } \int_{\R^2} e^{- \frac{f(x_1,x_2+iq_1)}{h}} \tilde \omega (x_1,x_2+iq(x_1)) \,\dd x, \]
with
\begin{multline}
f(x) = \frac{b_1}{4} \left( |x-x_\ell|^2 + |x-x_r|^2 - \frac{L^2}{2}  +2iLx_2 \right) \\- M_\ell \ln \left( \frac 2L \left( \frac L2 +x_1 +ix_2 \right) \right) - M_r \ln \left( \frac 2L \left( \frac L2 -x_1 +ix_2 \right) \right)
\end{multline}
and
\begin{equation}
\tilde \omega(x_1,x_2) =  \left( \frac{\frac{b_1L^2}{8} - M_\ell}{\frac{b_1}{2}\left( \frac L2 + x_1 \right)^2 +\frac{b_1x_2^2}{2}   -M_\ell}  \right)^{\delta_\ell}
 \left( \frac{\frac{b_1L^2}{8} - M_r}{\frac{b_1}{2}\left( \frac L2 - x_1 \right)^2 +\frac{b_1x_2^2}{2}   -M_r}  \right)^{\delta_r}.
\end{equation}
The function $x_1 \mapsto {\rm Re} f(x_1,x_2+iq(x_1))$ has a non-degenerate global minimum at $x_1=0$, and consequently, the Laplace method gives
\begin{equation}
| \langle \psi_\ell, \psi_r \rangle | \leq C h^{- \frac 12} e^{- \frac{d_\ell(L/2) + d_r(L/2)}{h} } \int_{\R} e^{- {\rm{Re}} \frac{f(0,x_2+iq(0))}{h}} \tilde \omega(0,x_2+iq(0)) \,\dd x_2.
\end{equation}
Note further that $f(0,x_2) = g(x_2)$ and $\tilde \omega(x_2) = \frac{1}{g'(x_2)}\omega(x_2)$ holds with the notations of the proof of Lemma \ref{lem.w}, thus we get
\begin{equation}
| \langle \psi_\ell, \psi_r \rangle | \leq  C h^{-1}|w|,
\end{equation}
which concludes the proof.
\end{proof}

Finally, we need an estimate from which the condition on well spacing $L$ follows:
\begin{lemma}\label{lem.error}
If $L > (2+\sqrt 6) a$ then for $\star \in \lbrace \ell, r \rbrace$ we have
\[ 2d_\star(L-a) > S .\]
In particular, if $\eta >0$ is small enough then $e^{-2d_\star(L-a-4\eta)} = o(h^\nu e^{-S/h})$ holds for all $\nu \in \mathbb R$.
\end{lemma}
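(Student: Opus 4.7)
My plan is to reduce the inequality $2d_\star(L-a) > S$ to an explicit algebraic condition by making $d_\star$ fully explicit on $[a,\infty)$. Since $B_\star = b_1$ outside $D(x_\star,a)$, for $v\geq a$ the flux equals $\int_{D(x_\star,v)} B_\star \, \dd x = \pi b_1 v^2 - 2\pi M_\star$, where $M_\star := \frac{1}{2\pi}\int_{\R^2}(b_1-B_\star)\,\dd x$ lies in $(0,b_1a^2/2)$; differentiating the definition of $d_\star$ gives $d_\star'(v) = b_1 v/2 - M_\star/v$, and integrating from $a$ yields
\[
  d_\star(s) = d_\star(a) + \frac{b_1(s^2-a^2)}{4} - M_\star \ln\frac{s}{a},\qquad s\geq a.
\]

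Because $L > 2a$, I can apply this formula at both $s=L/2$ and $s=L-a$. Substituting into $S = d_\ell(L/2)+d_r(L/2)+I$ with the closed form of $I$ from Theorem~\ref{thm.doublewell}, the quadratic $+b_1 L^2/8$ appearing in $d_\ell(L/2)+d_r(L/2)$ cancels against the $-b_1 L^2/8$ contained in $I$, the constant $-b_1 a^2/2$ cancels against the analogous term of $2d_\star(L-a)$, and the two logarithms combine into $M\ln((L+2\tilde q)/(2a))$, where $M=M_\ell+M_r$ and $\tilde q := \sqrt{L^2/4 - M/b_1}$. After these cancellations the inequality $2d_\star(L-a)>S$ becomes the scalar condition
\[
  (d_\star(a)-d_{\star'}(a)) + \frac{b_1}{2}\bigl[(L-a)^2 - L\tilde q\bigr] - \frac{M}{2} - 2M_\star\ln\frac{L-a}{a} + M\ln\frac{L+2\tilde q}{2a} > 0,
\]
where $\star'$ denotes the well opposite to $\star$.

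The remaining task is to check this for both $\star\in\{\ell,r\}$ under the hypothesis $L>(2+\sqrt{6})a$. Using the a priori bounds $M_\star\in(0,b_1 a^2/2)$, $|d_\star(a)-d_{\star'}(a)|\leq b_1 a^2/4$, and $\tilde q\leq L/2$, the inequality reduces to a one-variable condition in $\lambda=L/a$, and the chosen threshold $\lambda = 2+\sqrt{6}$ is the value at which a convenient crude bound becomes positive once the worst-case configuration of $(M_\ell,M_r,d_\ell(a),d_r(a))$ is accounted for. Finally, the \emph{in particular} statement is immediate: since $2d_\star(L-a)>S$ strictly and $d_\star$ is continuous, one can choose $\eta>0$ so small that $2d_\star(L-a-4\eta)\geq S+\delta$ for some $\delta>0$, whence $e^{-2d_\star(L-a-4\eta)/h}\leq e^{-\delta/h}e^{-S/h}$, and $e^{-\delta/h}$ dominates any polynomial factor $h^\nu$ as $h\to 0$.

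The main obstacle is the scalar inequality above: although purely algebraic, it couples the parameters $\lambda$, $M_\ell$, $M_r$ and $d_\star(a)-d_{\star'}(a)$, each subject to its own admissible range, and requires some bookkeeping to identify the correct threshold and to arrange the cancellations cleanly.
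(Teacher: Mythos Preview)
Your reduction is correct: making $d_\star$ explicit for $s\geq a$ via the flux formula, substituting into $S=d_\ell(L/2)+d_r(L/2)+I$, and cancelling yields exactly the scalar inequality you display, and your deduction of the ``in particular'' clause from the strict inequality is clean. This is also the route taken in \cite[Lemma~5.2]{FMR}, to which the paper simply defers.

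The gap is that you do not actually establish the scalar inequality. You assert that ``the chosen threshold $\lambda=2+\sqrt 6$ is the value at which a convenient crude bound becomes positive,'' but you neither exhibit that bound nor verify it; you even flag this step yourself as ``the main obstacle.'' Since this inequality \emph{is} the content of the lemma, what you have written is a plan rather than a proof. Two places where the plan is shaky: (i) the estimate $|d_\star(a)-d_{\star'}(a)|\leq b_1 a^2/4$ presumes $0\leq d_\star(a)\leq b_1 a^2/4$, hence $B_\star\leq b_1$ pointwise, which is not among the stated hypotheses (it is used tacitly elsewhere, e.g.\ $M_\star<b_1a^2/2$ in Lemma~\ref{lem.psi.asymp}, so it is not fatal, but it should be made explicit); (ii) replacing $\tilde q$ by $L/2$ throws away the dependence on $M$ in both the bracket $[(L-a)^2-L\tilde q]$ and the logarithm, and a naive worst-case over $(M_\ell,M_r,d_\ell(a),d_r(a))$ does not visibly produce the threshold $2+\sqrt 6$. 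For instance, as $M\to 0$ the dominant term $(L-a)^2-L^2/2$ already vanishes at $L=(2+\sqrt 2)a$, so the specific constant $2+\sqrt 6$ must emerge from the interaction with the remaining $M$-dependent terms, which you have not analysed. To complete the argument you must actually carry out the optimisation (or produce an explicit lower bound depending only on $\lambda=L/a$) and check its sign at $\lambda=2+\sqrt 6$.
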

\begin{proof}
The proof is identical with that of \cite[Lemma 5.2]{FMR}.
\end{proof}

\noindent Putting the above results together we complete the proof of Theorem \ref{thm.doublewell}.

\appendix

\section{About the single well ground state}

\begin{lemma}\label{lem.psi.formule}
In the exterior of the well, $|x-x_\star| \geq a$, we have
\[ \Psi_\star(x) = C_\star |x-x_\star|^{\gamma_{\star}} \int_0^{\infty} e^{- \frac{b_1}{4h} (1+2t) |x-x_\star|^2} m_\star(t) \,\dd t\]
with
\[ \gamma_\star = \frac{M_\star}{h}, \quad \delta_\star =\frac{b_1 h - \mu_\star}{2hb_1}, \quad m_\star(t) = t^{\delta_\star -1} (1+t)^{\gamma_\star - \delta_\star},\]
where $M_\star = \frac{1}{2\pi} \int_{\R^2} \big( b_1 - B_\star(x) \big) \dd x$ and
\[ C_\star = \frac{\Psi_\star(0)}{|x_\star|^{\gamma_\star}} \Big( \int_0^\infty e^{- \frac{b_1}{4h}(1+2t)|x_\star|^2} m_\star(t) \dd t \Big)^{-1}.\]
\end{lemma}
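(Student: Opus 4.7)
\textbf{Proof plan for Lemma~\ref{lem.psi.formule}.} The strategy is to reduce the eigenvalue problem on the exterior region to a Kummer (confluent hypergeometric) equation and then use the Laplace-type integral representation of the subdominant Kummer function.

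The starting point is that $\Psi_\star$ is radial in $x-x_\star$ by Theorem~\ref{thm.single.magnetic.well}. Writing $r = |x-x_\star|$ and using the circular gauge, the eigenvalue equation $\mathscr{L}_\star \Psi_\star = \mu_\star \Psi_\star$ takes the radial form
\[
-h^2\Big(\partial_r^2 + \tfrac{1}{r}\partial_r\Big)\Psi_\star + \frac{1}{r^2}\Big(\int_0^r B_\star(s)\, s\, \mathrm{d}s\Big)^{\!2}\Psi_\star = \mu_\star \Psi_\star.
\]
For $r \geq a$ one has $B_\star \equiv b_1$, so $\int_0^r B_\star(s)s\,\mathrm{d}s = \frac{b_1 r^2}{2} - M_\star$ and the potential collapses to the explicit expression
\[
V(r) = \frac{b_1^2 r^2}{4} - b_1 M_\star + \frac{M_\star^2}{r^2}.
\]

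I would then make the ansatz $\Psi_\star(r) = r^{\gamma_\star}\, e^{-b_1 r^2/(4h)}\, W(r)$, which strips off both the expected Gaussian decay coming from the harmonic-oscillator-type term $b_1^2 r^2/4$ and the $r^{M_\star/h}$ algebraic prefactor dictated by the centrifugal term. A direct computation of $\Psi_\star''+\tfrac{1}{r}\Psi_\star'$ shows that with the choice $\gamma_\star = M_\star/h$ the dangerous $1/r^2$ and $r$-independent terms cancel exactly, and $W$ satisfies the first-order-coefficient ODE
\[
W'' + \Big(\frac{2\gamma_\star+1}{r} - \frac{b_1 r}{h}\Big) W' + \frac{\mu_\star - hb_1}{h^2}\, W = 0.
\]
The change of variable $z = b_1 r^2/(2h)$ transforms this into Kummer's equation
\[
z\,\ddot W + (\gamma_\star + 1 - z)\,\dot W - \delta_\star\, W = 0,
\]
with $\delta_\star = (b_1 h - \mu_\star)/(2hb_1)$, which is the parameter appearing in the lemma.

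Among the two independent solutions of Kummer's equation, the Kummer function $M(\delta_\star,\gamma_\star+1,z)$ grows like $e^z$ as $z \to \infty$ and would destroy the decay already encoded in the Gaussian factor; the remaining $L^2$ solution is the Tricomi function $U(\delta_\star,\gamma_\star+1,z)$. Note that Theorem~\ref{thm.single.magnetic.well} gives $\mu_\star = h \min B_\star + \mathscr{O}(h^2) < b_1 h$ for $h$ small, hence $\delta_\star > 0$, which ensures integrability at $t=0$ of the classical representation
\[
U(\delta_\star,\gamma_\star+1,z) = \frac{1}{\Gamma(\delta_\star)}\int_0^\infty e^{-zt}\, t^{\delta_\star -1}(1+t)^{\gamma_\star - \delta_\star}\,\mathrm{d}t.
\]
Substituting $z = b_1 r^2/(2h)$ and absorbing the Gaussian factor inside the integral produces precisely the kernel $e^{-(b_1/(4h))(1+2t)r^2}\, m_\star(t)$ of the claimed formula. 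Finally, $C_\star$ is fixed by evaluating the identity at the origin, which lies in the exterior region since $|x_\star| = L/2 > a$; this gives the stated expression for $C_\star$ in terms of $\Psi_\star(0)$.

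The computation is essentially algebraic; the only point requiring mild care is the identification of $U$ (rather than $M$) as the relevant solution, which is forced by the known square-integrability of $\Psi_\star$ at infinity (equivalently, by the Agmon decay from Proposition~\ref{prop.WKB}). Positivity of $\delta_\star$ for small $h$ and the fact that $0 \in \{|x-x_\star| \geq a\}$ are the two mild checks needed to make the formula well posed.
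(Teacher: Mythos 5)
Your derivation is correct and follows exactly the route the paper intends: the paper's proof is a one-line reference to Kummer functions (via \cite{FMR}), and your reduction of the exterior radial equation to Kummer's equation, the identification of the Tricomi solution $U(\delta_\star,\gamma_\star+1,z)$ by square-integrability, and its Laplace integral representation reproduce precisely the stated formula, with the normalization fixed at $x=0$ since $|x_\star|=L/2>a$. The auxiliary checks ($\delta_\star>0$ for small $h$ because $\min B_\star<b_1$, and the cancellation forced by $\gamma_\star=M_\star/h$) are also right, so nothing is missing.
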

\begin{proof}
The claim is established using the Kummer functions, see \cite{FMR}.
\end{proof}

\begin{lemma}\label{lem.psi.asymp}
For $x_1 \in \big(- \frac L2 +a, \frac L2 -a \big)$, the function $x_2 \mapsto \Psi_\star(x_1,x_2)$ has a holomorphic extension in the second argument to the region
\[ \Omega(x_1) = \big\lbrace x_2 \in  \mathbb C , \, \big(x_1 \pm \frac L2\big)^2 + ({\rm{Re}} x_2)^2 - ({\rm{Im}} x_2)^2 > a^2 \big\rbrace. \]
Moreover, there exists a positive constant $c_\star >0$ such that
\[ \Psi_\star(x) = c_\star h^{- \frac 12} \left( \frac{\frac{b_1L^2}{8} - M_\star}{\frac{b_1}{2}|x-x_\star|^2 -M_\star}  \right)^\delta  \frac{|x-x_\star|^{\gamma_\star}}{|x_\star|^{\gamma_\star}} e^{- \frac{d_\star(L/2)}{h}} e^{- \frac{b_1}{4h}(|x-x_\star|^2 - |x_\star|^2)} \big( 1+ \mathscr{O}(h) \big) \]
holds locally uniformly for $x_1 \in \big(- \frac L2 +a, \frac L2 -a \big)$ and $x_2 \in \Omega(x_1)$.
\end{lemma}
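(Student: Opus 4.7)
The plan is to begin from the explicit integral representation of $\Psi_\star$ in Lemma~\ref{lem.psi.formule}, extend it holomorphically in $x_2$, and then apply a uniform Laplace-type expansion concentrated at the boundary $t=0$ of the $t$-integration. The overall prefactor $h^{-1/2}e^{-d_\star(L/2)/h}$ then comes from matching the result with the WKB approximation of Proposition~\ref{prop.WKB} evaluated at the origin.

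First I would check the holomorphic extension. Since $|x-x_\star|^2=(x_1\pm L/2)^2+x_2^2$ is polynomial in $x_2$, the integrand in Lemma~\ref{lem.psi.formule} extends holomorphically in $x_2$ to the set where $\operatorname{Re}|x-x_\star|^2>a^2$, which is exactly $\Omega(x_1)$. Uniform convergence of the $t$-integral on compact subsets of $\Omega(x_1)$ follows from the linear growth in $t$ of the real part of $(1+2t)|x-x_\star|^2$, so the extended $\Psi_\star(x_1,\,\cdot\,)$ is holomorphic on $\Omega(x_1)$ by Morera's theorem.

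Next, setting $r^2=|x-x_\star|^2$, using $\gamma_\star=M_\star/h$, and absorbing $(1+t)^{\gamma_\star}$ into the exponential, the integral becomes
\[ I(r^2,h)=\int_0^\infty e^{-\phi(t,r^2)/h}\,t^{\delta_\star-1}(1+t)^{-\delta_\star}\,dt,\quad \phi(t,r^2)=\tfrac{b_1(1+2t)r^2}{4}-M_\star\ln(1+t). \]
The crucial observation is that $\partial_t\phi(0,r^2)=\alpha(r^2):=b_1r^2/2-M_\star$ has strictly positive real part throughout $\Omega(x_1)$, because $\operatorname{Re}r^2>a^2$ there and $2M_\star<a^2b_1$ (the latter since $B_\star>0$ forces $2M_\star=\pi^{-1}\int_{D(x_\star,a)}(b_1-B_\star)\,\dd x<a^2b_1$). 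Hence the minimum of $\operatorname{Re}\phi(\,\cdot\,,r^2)$ on $[0,\infty)$ is attained at the boundary $t=0$, and Watson's lemma with complex rate $\alpha(r^2)/h$, after deforming the contour into the steepest-descent ray from the origin, yields
\[ I(r^2,h)=e^{-b_1r^2/(4h)}\bigl(h/\alpha(r^2)\bigr)^{\delta_\star}\Gamma(\delta_\star)\bigl(1+\mathscr{O}(h)\bigr) \]
locally uniformly in $r^2$.

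Finally, Lemma~\ref{lem.psi.formule} also gives
\[ \Psi_\star(x)=\Psi_\star(0)\,\frac{|x-x_\star|^{\gamma_\star}}{|x_\star|^{\gamma_\star}}\,\frac{I(r^2,h)}{I(L^2/4,h)}, \]
and the ratio of the $I$'s supplies the factors $e^{-b_1(r^2-L^2/4)/(4h)}$ and $\bigl((b_1L^2/8-M_\star)/\alpha(r^2)\bigr)^{\delta_\star}$. The remaining prefactor is $\Psi_\star(0)$, which by Proposition~\ref{prop.WKB} equals $c_\star h^{-1/2}e^{-d_\star(L/2)/h}(1+\mathscr{O}(h))$ with $c_\star=a_0(0)>0$, and the pieces recombine to yield the claimed formula. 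The hard part is not any single step but the \emph{uniformity} of the Laplace/Watson expansion in the complex parameter $r^2$: one must control the steepest-descent contour and the remainder locally uniformly on compact subsets of $\Omega(x_1)$ so that the resulting asymptotic is a genuine identity of holomorphic functions. This technical uniformity is the core of the argument and is handled in detail in \cite{FMR}.
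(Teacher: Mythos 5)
Your proposal is correct and follows essentially the same route as the paper's (very condensed) proof: holomorphy of the integral representation from Lemma~\ref{lem.psi.formule} on $\Omega(x_1)$ using $M_\star<\tfrac{b_1a^2}{2}$, a Laplace/Watson expansion of the $t$-integral at the endpoint $t=0$ with complex rate $\tfrac{b_1}{2}|x-x_\star|^2-M_\star$, and the WKB value of $\Psi_\star(0)$ from Proposition~\ref{prop.WKB} to fix the prefactor $c_\star h^{-1/2}e^{-d_\star(L/2)/h}$. You supply the details (endpoint monotonicity of $\mathrm{Re}\,\phi$, the ratio $I(r^2,h)/I(L^2/4,h)$, and the uniformity on compacts) that the paper delegates to the Laplace method and to \cite{FMR}, so no gap.
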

\begin{proof}
The claim follows from Lemma \ref{lem.psi.formule}. Since $M_\star < \frac{b_1 a^2}{2}$, the integral defining $\Psi_\star$ in holomorphic on $\Omega(x_1)$. One can estimate $\Psi_\star$ and $C_\star$ using the Laplace method; the value of $\Psi_\star(0)$ is given by the WKB approximation, Proposition~\ref{prop.WKB}.
\end{proof}

\section{About the complex phase}

In the above reasoning we needed the phase shift $\sigma_r - \sigma_\ell$ defined by \eqref{def.sigma} expressing the gauge difference between the two wells. It can be calculated as follows,
\begin{align*}
\sigma_r(x) - \sigma_\ell(x) &= \int_{[0,x]} (A_\ell - A_r) \cdot \dd s \\
&=\int_0^{x_1} (A_\ell - A_r)_{1}(u,0) \,\dd u + \int_0^{x_2} (A_\ell - A_r)_2 (x_1,u) \,\dd u \\
&= \int_0^{x_2} (A_\ell - A_r)_2(x_1,u) \,\dd u,
\end{align*}
with
\begin{align*}
A_\star(x) = \int_0^1 B_\star(x_\star + t(x-x_\star))t \,\dd t
\begin{pmatrix}
-x_2 \\ x_1 \pm \frac L2
\end{pmatrix}
= \left( \frac{b_1}{2} - \frac{M_\star}{|x-x_\star|^2} \right) \begin{pmatrix}
-x_2 \\ x_1 \pm \frac L2
\end{pmatrix}
\end{align*}
and the sign $+$ corresponding $A_\ell$ and $-$ respectively to $A_r$. In particular, we have
\[ A_{\star,2}(x) = \frac{b_1}{2} \big(x_1 \pm \frac L2\big) - \frac{M_\star (x_1 \pm \frac L2)}{(x_1 \pm \frac L2)^2 + x_2^2} \]
and
\[ A_{\ell,2}(x) - A_{r,2}(x) = \frac{b_1 L}{2} +\frac{M_r (x_1 - \frac L2)}{(x_1 - \frac L2)^2 + x_2^2} - \frac{M_\ell (x_1 + \frac L2)}{(x_1 + \frac L2)^2 + x_2^2}.\]
Consequently, the phase difference equals
\begin{equation}\label{eq.sigma.diff}
 \sigma_r(x) - \sigma_\ell(x) = \frac{b_1 L}{2}x_2 - M_r \arctan \left( \frac{x_2}{\frac L2 - x_1} \right) - M_\ell \arctan \left( \frac{x_2}{x_1 + \frac L2} \right).
\end{equation}
In Lemma~\ref{lem.scalarproduct} we have also encounter the phase
\begin{multline}\label{eq.phase}
f(x) = \frac{b_1}{4} \left( |x-x_\ell|^2 + |x-x_r|^2 - \frac{L^2}{2} \right) - M_r \ln \left( \frac{|x-x_r|}{|x_r|} \right) - M_\ell  \ln \left( \frac{|x-x_\ell|}{|x_\ell|} \right) \\
+ i \frac{b_1 L}{2} x_2 - i M_r \arctan \left( \frac{x_2}{\frac L2 - x_1} \right) -i M_\ell \arctan \left( \frac{x_2}{x_1 + \frac L2} \right).
\end{multline}
which can be rewritten using the complex logarithm
\begin{multline}
f(x) = \frac{b_1}{4} \left( |x-x_\ell|^2 + |x-x_r|^2 - \frac{L^2}{2}  +2iLx_2 \right) \\- M_\ell \ln \left( \frac 2L \left( \frac L2 +x_1 +ix_2 \right) \right) - M_r \ln \left( \frac 2L \left( \frac L2 -x_1 +ix_2 \right) \right).
\end{multline}
Note that the function $g(y) = f(0,y)$ of Lemma~\ref{lem.w} is given by
\begin{equation}\label{eq.phase.g}
g(y) = \frac{b_1y^2}{4} + i \frac{b_1 Ly }{2} - (M_\ell + M_r) \ln \left( 1 + i \frac{2y}{L} \right).
\end{equation}

\section*{Acknowledgments}
The work of P.E. was in part supported by the European Union's Horizon 2020 research and innovation programme under the Marie Sk\l odowska-Curie grant agreement No 873071.

The work of L.M. is funded by the European Union (via the ERC Advanced Grant MathBEC - 101095820). Views and opinions expressed are however those of the author only and do not necessarily reflect those of the European Union or the European Research Council. Neither the European Union nor the granting authority can be held responsible for them.
	
\bibliographystyle{plain}
\bibliography{UTF-8biblio}

\begin{thebibliography}{10}

\bibitem{FSW22}
C.~Fefferman, J.~Shapiro, and M.~I. Weinstein.
\newblock Lower bound on quantum tunneling for strong magnetic fields.
\newblock {\em SIAM J. Math. Anal.}, 54(1):1105--1130, 2022.

\bibitem{GBFMR}
S.~Fournais, Y.~Guedes~Bonthonneau, L.~Morin, and N.~Raymond.
\newblock Tunneling between magnetic wells in two dimensions.
\newblock {\em arXiv:2502.17290}, 2025.

\bibitem{FM25}
S.~Fournais and L.~Morin.
\newblock Magnetic tunneling between disc-shaped obstacles.
\newblock {\em Commun. Math. Phys.}, 406(114), 2025.

\bibitem{FMR}
S.~Fournais, L.~Morin, and N.~Raymond.
\newblock Purely magnetic tunneling between radial magnetic wells.
\newblock {\em Rev.Mat.Iberoam.}, 2025.

\bibitem{GGJL84}
S.~Graffi, V.~Grecchi, and G.~Jona-Lasinio.
\newblock Tunnelling instability via perturbation theory.
\newblock {\em J. Phys. A: Math. Gen.}, 17:2935--2944, 1984.

\bibitem{2Dradial}
Y.~Guedes~Bonthonneau, T.~Nguyen~Duc, N.~Raymond, and S.~V\~u{} Ng\d{o}c.
\newblock Magnetic {WKB} constructions on surfaces.
\newblock {\em Rev. Math. Phys.}, 33(7):Paper No. 2150022, 41, 2021.

\bibitem{WKB}
Y.~Guedes~Bonthonneau and N.~Raymond.
\newblock W{KB} constructions in bidimensional magnetic wells.
\newblock {\em Math. Res. Lett.}, 27(3):647--663, 2020.

\bibitem{GBRVN21}
Y.~Guedes~Bonthonneau, N.~Raymond, and S.~V\~{u}~Ng\d{o}c.
\newblock Exponential localization in 2{D} pure magnetic wells.
\newblock {\em Ark. Mat.}, 59(1):53--85, 2021.

\bibitem{H88}
B.~Helffer.
\newblock {\em Semi-classical analysis for the {S}chr\"{o}dinger operator and
  applications}, volume 1336 of {\em Lecture Notes in Mathematics}.
\newblock Springer-Verlag, Berlin, 1988.

\bibitem{HK22}
B.~Helffer and A.~Kachmar.
\newblock Quantum tunneling in deep potential wells and strong magnetic field
  revisited.
\newblock {\em Pure Appl. Anal.}, 6(2):319--352, 2024.

\bibitem{HKS23}
B.~Helffer, A.~Kachmar, and M.~P. Sundqvist.
\newblock Flux and symmetry effects on quantum tunneling.
\newblock {\em Mathematische Annalen}, 2024.

\bibitem{HKo}
B.~Helffer and Y.~Kordyukov.
\newblock Accurate semiclassical spectral asymptotics for a two-dimensional
  magnetic {S}chr\"odinger operator.
\newblock {\em Ann. Henri Poincar\'e}, 16(7):1651--1688, 2015.

\bibitem{HS}
B.~Helffer and J.~Sj{\"o}strand.
\newblock Multiple wells in the semi-classical limit. {I}.
\newblock {\em Commun. Partial Differ. Equations}, 9:337--408, 1984.

\bibitem{HS87b}
B.~Helffer and J.~Sjöstrand.
\newblock Effet tunnel pour l'\'equation de schrödinger avec champ
  magn\'etique.
\newblock {\em Annali della Scuola Superiore di Pisa}, 14(4):625--657, 1987.

\bibitem{JLMS81}
G.~Jona-Lasinio, F.~Martinelli, and E.~Scoppola.
\newblock New approach to the semiclassical mechanics. i. multiple tunnelings
  in one dimension.
\newblock {\em Commun. Math. Phys.}, 80:2223--254, 1981.

\bibitem{M24}
L.~Morin.
\newblock Tunneling effect between radial electric wells in homogeneous
  magnetic field.
\newblock {\em Lett. Math. Phys.}, 114(29), 2024.

\bibitem{Raymond17}
N.~Raymond.
\newblock {\em Bound states of the magnetic Schrödinger operator}, volume~27
  of {\em EMS Tracts in Mathematics}.
\newblock EMS Publ., 2017.

\bibitem{RV}
N.~Raymond and S.~V\~u{} Ng\d{o}c.
\newblock Geometry and spectrum in 2{D} magnetic wells.
\newblock {\em Ann. Inst. Fourier (Grenoble)}, 65(1):137--169, 2015.

\bibitem{Simon85}
B.~Simon.
\newblock Semiclassical analysis of low lying eigenvalues. iv. the flea on the
  elephant.
\newblock {\em J. Funct. Anal.}, 63:123--136, 1985.

\end{thebibliography}

\end{document}